\def\v #1{\vert #1\vert}             
\def\m #1 #2{(-1)^{{\v #1} {\v #2}}} 
\theoremstyle{plain}
\newtheorem{theorem}{Theorem}
\newtheorem{proposition}[theorem]{Proposition}
\theoremstyle{definition}
\newtheorem{definition}[theorem]{Definition}
\def\<#1>{\langle#1\rangle}
\begin{document}

\centerline{\Large \bf Cosymplectic and contact structures to resolve}\vskip 0.25cm
\centerline{\Large \bf  time-dependent and dissipative hamiltonian systems}

\medskip
\medskip

\centerline{M. de Le\'on and C.
Sard\'on}
\vskip 0.5cm
\centerline{Instituto de Ciencias Matem\'aticas, Campus Cantoblanco}\vskip 0.2cm
\centerline{Consejo Superior de Investigaciones Cient\'ificas}
\vskip 0.2cm
\centerline{C/ Nicol\'as Cabrera, 13--15, 28049, Madrid. SPAIN}

\begin{abstract}

In this paper, we apply the geometric Hamilton--Jacobi theory to obtain solutions of classical hamiltonian systems 
that are either compatible with a cosymplectic or a contact structure. As it is well known, the first structure plays a central role
in the theory of time-dependent hamiltonians, whilst the second is here used to treat classical hamiltonians including dissipation terms.

The interest of a geometric Hamilton--Jacobi equation is the primordial observation that if a hamiltonian vector field $X_{H}$ can be projected into a
configuration manifold by means of a 1-form $dW$, then the integral curves of the projected
vector field $X_{H}^{dW}$ can be transformed into integral curves of $X_{H}$ provided that $W$ is a solution of the Hamilton--Jacobi equation.
In this way, we use the geometric Hamilton--Jacobi theory to derive solutions of physical systems with a time-dependent
hamiltonian formulation or including dissipative terms. Explicit, new expressions for a geometric
Hamilton--Jacobi equation are obtained on a cosymplectic and a contact manifold. These equations are later
used to solve physical examples containing explicit time dependence, as it is the case of a unidimensional trigonometric system,
and two dimensional nonlinear oscillators as Winternitz--Smorodinsky oscillators. For explicit dissipative behavior, we solve 
the example of a unidimensional damped oscillator.

\end{abstract}

\section{Introduction}
%

In this paper we are concerned with {almost cosymplectic structures} and their application in classical hamiltonian mechanics.
By an {\it almost cosymplectic structure} we understand a $2n+1$-dimensional manifold equipped with a one-form $\eta$ and a two-form $\Omega$ such that $\eta\wedge \Omega^n$ is a volume form.
In particular, we will study two cases of almost cosymplectic manifolds. On one hand, the case of {cosymplectic manifolds} \cite{Blair,Cape,LeonSara,LeonTuyn}, and on the other hand, the case of
{contact manifolds} \cite{Blair,boothwang,etnyre,Godbillon}.
Cosymplectic manifolds have shown their usefulness in theoretical physics, as in gauge theories of gravity, branes and string theory \cite{Becker,Cham,Hitchin}. 
Among the early studies of cosymplectic manifolds we mention A. Lichnerowicz \cite{Lich2,Lich}, who studied the Lie algebra of infinitesimal automorphisms of a 
cosymplectic manifold, in analogy with the symplectic case. 
Posteriously, some works have endowed cosymplectic manifolds with a Riemannian metric, the so-called {coK\"ahler manifolds} \cite{oku}. These
are the odd dimensional counterpart of K\"ahler manifolds.
Another important role of the cosymplectic theory is the reduction theory to reduce time-dependent hamiltonians by symmetry groups \cite{Albert,CLMMdD,CLM}.
But since very foundational papers by P. Libermann, very sporadic papers have appeared on cosymplectic settings. It is our future intention to provide surveys on cosymplectic geometry due to their lack \cite{Cape,LeonChineaMarrero}.
Our particular interest in cosymplectic structures resides in their use in the description of time-dependent mechanics. They are present
in numerous formulations of classical regular lagrangians \cite{LeonMarinMarrero}, hamiltonian systems \cite{Laci} or Tulczyjew-like descriptions \cite{LeonMarrero} in terms
of lagrangian submanifolds \cite{LMM}.

However, there are more written monographs on contact geometry. The interest of contact structures roots in their applications 
in partial differential equations appearing in thermodynamics \cite{rajeev}, geometric mechanics \cite{IboLeonMarmo},
geometric optics \cite{CariNasarre,hamil1,hamil2}, geometric quantization \cite{rajeev} and applications to low dimensional topology, as it can be the characterization
of Stein manifolds \cite{Otto,Stein}.
Also, the theory of contact structures is linked to many other geometric backgrounds, as symplectic geometry, riemannian and complex geometry,
analysis and dynamics \cite{Blair, Godbillon}.

For both cosymplectic and contact approaches, the role of a vector field (said to be hamiltonian) with a corresponding smooth function (the hamiltonian function) with respect
to its corresponding structure, is primordial to have dynamics. Furthermore, this vector field will be key in the construction of a geometric Hamilton--Jacobi theory.
This theory has grown popular due to its simplicity and its equivalence to other theories of classical mechanics.
It is based on a principal idea: a hamiltonian vector field $X_{H}$ can be projected into the configuration manifold by means of a 1-form $dW$, then the integral curves of the projected
vector field $X_{H}^{dW}$ can be transformed into integral curves of $X_{H}$ provided that $W$ is a solution of the Hamilton--Jacobi equation \cite{Arnold,Gold,Kibble,LL,LiberMarle,Rund}. In the last decades, the Hamilton--Jacobi theory has been interpreted in modern geometric terms \cite{CGMMMLRR,CGMMMLRR1,LeonIglDiego,MarrSosa,Pauffer}
and has been applied in multiple settings: as nonholonomic \cite{CGMMMLRR,CGMMMLRR1,LeonIglDiego,LeonMarrDiego}, singular lagrangian mechanics \cite{LeonMarrDiegoVaq,LeonDiegoVaq2} and classical field theories \cite{LeonMarrDiego08, Cedric}.
The construction of a Hamilton--Jacobi theory often relies in the existence of lagrangian/legendrian submanifolds, a notion that has gained a lot of attention given its applications in dynamics since their introduction by Tulczyjew \cite{tulzcy1,tulzcy2}. We show how these submanifolds
are a necessary condition for the obtainance of particular solutions through a geometric Hamilton--Jacobi equation. We will devise
this fact by using particular cases of lagrangian/legendrian submanifolds in cosymplectic and contact geometry.

%
%
%

The paper is organized as follows: Section 2 is dedicated to review fundamentals on geometric mechanics and the geometric Hamilton--Jacobi equation.
In Section 3, we recall some remarkable geometric structures of importance in mechanics. In particular, we focus on dynamics explained in geometric terms through contact and cosymplectic manifolds.
Section 4 contains the theory of lagrangian--legendrian submanifolds which will be key in the formulation of the Hamilton--Jacobi theory in subsequent sections.
In Section 5, we propose a geometric Hamilton--Jacobi theory on cosymplectic manifolds and illustrate our result with examples: a one-dimensional
trigonometric system and two-dimensional nonlinear oscillators. One of the oscillators is the well-known Winternitz--Smorodinsky oscillator, for which we obtain an explicit expression
for the solution $\gamma$ of the Hamilton--Jacobi equation on the cosymplectic manifold.
Similarly, we devote Section 6 to a geometric Hamilton--Jacobi equation on a contact manifold. We also illustrate our result through an example,
a unidimensional damped oscillator with a dissipative term. 




\section{Geometric Mechanics: Fundamentals}
We hereafter assume all mathematical objects to be $C^{\infty}$, globally defined and that all manifolds are connected. This permit us to
omit technical details while highlighting the main aspects of our theory.


\subsection*{Hamiltonian Mechanics}

A classical hamiltonian system is given by a Hamilton function $H(q^i,p_i)$, where $(q^i)$ are the positions in a configuration manifold $Q$ and $(p_i)$ are the conjugated
momenta, for $i=1,\dots,n$. The hamiltonian can be interpreted as total energy of the system $H=T+V$, where $H$ is a function on the cotangent bundle $T^{*}Q$ of $Q$.
We compute the differential of the Hamiltonian function, 
$$dH=\sum_{i=1}^n \left(\frac{\partial H}{\partial q^i}dq^i+\frac{\partial H}{\partial p_i}dp_i\right)$$
 and write the equation
\begin{eqnarray}\label{symmatrix}
\left. X_H = \left( \begin{array}{ccc}
0&I_n\\
-I_n&0 \end{array} \right) \left(
\begin{array}{c}
\frac{\partial H}{\partial q^i}\\
\frac{\partial H}{\partial p_i}
\end{array}
\right)
\right.
\end{eqnarray}
where $I_n$ is the identity matrix of order $n$. The above matrix is called a symplectic matrix.
The vector field $X_H$ is called a {\it hamiltonian vector field} and its integral curves $(q^i(t),p_i(t))$ satisfy the Hamilton equations 
\begin{equation}\label{hamileq7}
\left\{\begin{aligned}
 {\dot q}^i&=\frac{\partial H}{\partial p_i},\\
 {\dot p}_i&=-\frac{\partial H}{\partial q^i}
 \end{aligned}\right.
 \end{equation}
for all $i=1,\dots,n.$
We can define a Poisson bracket of two functions as 
$$\{f,g\}=\sum_{i=1}^n\left(\frac{\partial f}{\partial q^i}\frac{\partial g}{\partial p_i}-\frac{\partial f}{\partial p_i}\frac{\partial g}{\partial q^i}\right),$$
which is bilinear, skew symmetric and fulfills the Jacobi identity
\begin{equation*}
 \{f,\{g,h\}\}+\{f,\{g,h\}\}+\{h,\{f,g\}\}=0,\quad \forall f,g,h\in C^{\infty}(Q).
\end{equation*}
\noindent
The symplectic two-form 
\begin{equation}\label{symp}
\omega_Q=\sum_{i=1}^n dq^i\wedge dp_i
\end{equation}
 has the associated symplectic matrix explained in \eqref{symmatrix}. 
We can rewrite the Hamilton equations \eqref{hamileq7} in a compact, geometric form
\begin{equation}
 \iota_{X_H}\omega_Q=dH,
\end{equation}
\noindent
where $X_H$ is the Hamiltonian vector field whose expression in coordinates is
\begin{equation}
 X_H=\sum_{i=1}^n \left(\frac{\partial H}{\partial p_i}\frac{\partial}{\partial q^i}-\frac{\partial H}{\partial q^i}\frac{\partial}{\partial p_i}\right).
\end{equation}

The pair $(T^{*}Q, \omega_Q)$ is the prototype for any symplectic manifold as we show in subsequent lines.
A {\it symplectic manifold} is a pair $(M,\omega)$ such that the two-form $\omega$ is regular (that is, $\omega^n\neq 0$) and closed. Then, $M$ is even dimensional, say $2n$. 
The Darboux theorem states that given a symplectic manifold $(M,\omega)$ we can find Darboux coordinates $(q^i,p_i)$ around each point of $M$ such that the symplectic form
is \eqref{symp}.
Indeed, any symplectic manifold is locally equivalent to the cotangent bundle $T^{*}Q$ of a configuration manifold $Q$.

Given a configuration manifold $Q$, its cotangent bundle $T^{*}Q$ is the phase space.
We consider the canonical projection $\pi_Q:T^{*}Q\rightarrow Q$. 
From the Poisson bracket, we can define a canonical two-contravariant tensor such that $\Lambda_Q(df,dg)=\{f,g\}$, for all $f,g\in C^{\infty}(T^{*}Q)$.
This is what we call a Poisson bivector. In Darboux coordinates it reads
\begin{equation}
\Lambda_Q=\sum_{i=1}^n \frac{\partial}{\partial q^i}\wedge \frac{\partial}{\partial p_i},
\end{equation}
It is the contravariant version of the symplectic form \eqref{symp}. Furthermore, we consider the so-called Liouville form $\theta_Q=p_idq^i$ on $T^{*}Q$ such that $\omega_Q=-d\theta_Q$.

\subsection*{Hamilton--Jacobi equation}

The Hamilton equations \eqref{hamileq7} can be equivalently be solved with the aid of the Hamilton--Jacobi theory.
It consists of finding a {\it principal function}  $S(t,q^i)$, that fulfills
\begin{equation}\label{tdepHJ}
 \frac{\partial S}{\partial t}+H\left(q^i,\frac{\partial S}{\partial q^i}\right)=0,\quad i=1,\dots,n
\end{equation}
where $H=H(q^i,p_i)$ is the hamiltonian function of the system. Equation \eqref{tdepHJ} is referred to as the {\it Hamilton--Jacobi equation.}
 If we set the principal function to be separable in 
time, $S=W(q^1,\dots,q^n)-Et,$
where $E$ is the total energy of the system, then \eqref{tdepHJ} will now read \cite{AbraMars,Gold}

\begin{equation}\label{HJeq1}
 H\left({q}^i,\frac{\partial W}{\partial {q}^i}\right)=E.
\end{equation}
The Hamilton--Jacobi equation is a useful intrument to solve the Hamilton equations for $H$.
Indeed, if we find a solution $W$ of \eqref{HJeq1}, then any solution of the Hamilton
equations is retrieved by taking ${p}_i=\partial W/\partial {q}^i,$ in case we can provide a complete solution.

Geometrically, the Hamilton--Jacobi theory can be reformulated as follows. Given a hamiltonian
vector field $X_{H}:T^{*}Q\rightarrow TT^{*}Q$ and a one-form $dW$, we define the {\it projected}\footnote{By projected we do not refer to a projective vector field but to the restriction of a hamiltonian vector field on the phase space $T^{*}Q$ along the image of $dW$.}
vector field $X_{H}^{dW}:Q\rightarrow TQ$. Then, the integral curves of $X_{H}^{dW}$ can be transformed into integral curves of $X_{H}$ provided that $W$ is a solution of \eqref{HJeq1}. 
This explanation can be represented by the following diagram

\[
\xymatrix{ T^{*}Q
\ar[dd]^{\pi} \ar[rrr]^{X_H}&   & &TT^{*}Q\ar[dd]^{T\pi}\\
  &  & &\\
 Q\ar@/^2pc/[uu]^{dW}\ar[rrr]^{X_H^{dW}}&  & & TQ}
\]

\bigskip

This implies that $(dW)^{*}H=E$, with $dW$ a section of the cotangent bundle. In other words, we are looking for a section $\alpha$ of $T^{*}Q$ 
such that $\alpha^{*}H=E$. As it is well-known, the image of a one-form is a lagrangian submanifold of $(T^{*}Q, \omega_Q)$ if and only if $d\alpha=0$ \cite{AbraMars}.
That is, $\alpha$ is locally exact, say $\alpha=dW$ on an open subset around each point.


\section{Geometric structures and dynamics}
A {\it Jacobi structure} is the triple $(M,\Lambda, Z)$, where $Z$ is a vector field and $\Lambda$ is a
skew-symmetric bivector and such that they fulfill the following integrability conditions
\begin{equation}\label{jaccond}
 [\Lambda,\Lambda]=2Z\wedge \Lambda,\qquad \mathcal{L}_Z\Lambda=0.
\end{equation}
 If we drop the integrability conditions, we say we have an {\it almost Jacobi manifold.} The bivector $\Lambda$ defines a {\it Jacobi bracket} \cite{QuanLeonMarrPad}
\begin{equation}\label{jacbracket}
 \{f,g\}=\Lambda(df,dg)+fZ(g)-gZ(f),\quad f,g,\in M
\end{equation}
that is skew-symmetric and satisfies the weaker Leibniz identity condition
\begin{equation}
 \text{supp}\{f,g\}\subseteq  \text{supp}\{f\}\cap  \text{supp}\{g\}.
\end{equation}
This implies that the Jacobi bracket \eqref{jacbracket} is not a derivation in each argument but satisfies the Jacobi identity if $(\Lambda,Z)$ is Jacobi.
The space $C^{\infty}(M,\mathbb{R})$ is a local Lie algebra in the Kirillov sense \cite{kiri}.

Consider the morphism $\sharp_{\Lambda}:\Omega^{1}(M)\rightarrow \mathfrak{X}(M)$ that is the $C^{\infty}(M,\mathbb{R})$ linear mapping induced by $\Lambda$
between the $C^{\infty}$ modules of one-forms $\Omega^{1}(M)$ and vector fields $\mathfrak{X}(M)$ defined on $M$.
The sharp-lambda morphism $\sharp_{\Lambda}$ provides the pairing and components of the bivector $\langle \sharp_{\Lambda}(\alpha),\beta \rangle=\Lambda(\alpha,\beta)$, for $\alpha,\beta$ one-forms in $\Omega^{1}(M)$.
Vector fields associated with functions $f$ on the algebra of smooth functions $C^{\infty}(M,\mathbb{R})$ are defined as
\begin{equation*}
 X_f=\sharp_{\Lambda}(df)+fZ,
\end{equation*}

The {\it characteristic distribution} $\mathcal{C}$ of $(M,\Lambda,Z)$ is a subset of $TM$ generated by the
values of all the vector fields $X_f$. 
This characteristic distribution $\mathcal{C}$ is defined by $\Lambda$ and $Z$, that is,
\begin{equation}
 \mathcal{C}_p=\sharp_{\Lambda_p}(T^{*}_pM)+<Z_p>,\quad \forall p\in M
\end{equation}
where $\sharp_p:T_p^{*}M\rightarrow T_pM$ is the restriction of $\sharp_{\Lambda}$ to $T^{*}_pM$ for every $p\in M$.
Then, $\mathcal{C}_p=\mathcal{C}\cap T_{p}M$ is the vector subspace of $T_pM$ generated by
$Z_p$ and the image of the linear mapping $\sharp_p$. The distribution is said to be {\it transitive} if the characteristic distribution
is the whole tangent bundle $TM$.

\begin{definition}
 Given two Jacobi manifolds $(M_1,\Lambda_1,Z_1)$ and $(M_2,\Lambda_2,Z_2)$ we say that the map $\phi:{M_1}\rightarrow M_2$ is a {\it Jacobi map}
if given two functions $f,g \in C^{\infty}(M_2)$,
\begin{equation*}
 \{f\circ \phi,g\circ \phi\}_{M_1}=\{f,g\}_{M_2}\circ \phi.
\end{equation*}

\end{definition}

A Jacobi manifold $(M,\Lambda,Z)$ is said to be be Poisson when $Z=0$; in that case we write $(M,\Lambda)$ instead of $(M,\Lambda,0)$. 

Next, we shall show several examples of Poisson and Jacobi manifolds.

\subsection{Symplectic manifolds}

 Consider a pair $(M,\Omega)$, where $\Omega$ is a symplectic two-form. We define the map  
\begin{equation}\label{inv}
\flat:TM\rightarrow T^{*}M \quad \text{such that}\quad  \flat(X)=\iota_X\Omega
\end{equation}
which is an isomorphism. We can define its inverse as $\sharp:T^{*}M\rightarrow TM$. The bracket
\begin{equation}\label{corchet}
 \{f,g\}=\Omega(\sharp(df),\sharp(dg))=\langle dg,\sharp(df)\rangle=-\langle df, \sharp(dg)\rangle
\end{equation}
satisfies the Jacobi identity. The Hamiltonian vector field is $X_f=\sharp(df).$
The pair $(M,\Lambda)$ is a Poisson manifold of necessarily even dimension with
Poisson tensor given by
\begin{equation*}
\Lambda(\alpha,\beta)=\Omega(\sharp(\alpha),\sharp(\beta))
\end{equation*}
for $\alpha,\beta$ one-forms. In this case $\sharp=\sharp_{\Lambda}$ and $\sharp=\flat^{-1}$. 

\subsection{Locally conformally symplectic structures}

An {\it almost symplectic manifold} is a pair $(M,\Omega)$
where $\Omega$ is a nondegenerate two-form and $M$ is even dimensional. An almost symplectic manifold
is said to be {\it locally conformally symplectic} if for each point $x\in M$
there is an open neighborhood $U$ such that $d(e^{\sigma}\Omega)=0,$ for $\sigma:U\rightarrow \mathbb{R}$, so $(U,e^{\sigma}\Omega)$
is a symplectic manifold. If $U=M$, then it is said to be globally conformally symplectic.
An almost symplectic manifold is a locally (globally) conformally symplectic if there exists a one-form
$\eta$ that is closed $d\eta=0$ and
\begin{equation}
 d\Omega=\eta\wedge \Omega.
\end{equation}
The one-form $\eta$ is called the {\it Lee one-form}. Locally conformally symplectic manifolds (L.C.S.) with Lee form $\eta=0$ are symplectic manifolds.
We define a bivector $\Lambda$ on $M$ and a vector field $Z$ given by
\begin{equation}
 \Lambda(\alpha,\beta)=\Omega(\flat^{-1}(\alpha),\flat^{-1}(\beta))=\Omega(\sharp(\alpha),\sharp(\beta)),\quad Z=\flat^{-1}(\eta)
\end{equation}
with $\alpha,\beta\in \Omega^{1}(M)$ and $\flat: \mathfrak{X}(M)\rightarrow \Omega^{1}(M)$ is the isomorphism of $C^{\infty}(M,\mathbb{R})$
modules defined by $\flat(X)=\iota_X\Omega$. Here $\sharp=\flat^{-1}$. In this case, we also have $\sharp_{\Lambda}=\sharp$. The vector field $Z$ satisfies $\iota_{Z}\eta=0$ and $\mathcal{L}_{Z}\Omega=0, \mathcal{L}_{Z}\eta=0$.
Then, $(M,\Lambda,Z)$ is an even dimensional Jacobi manifold. 
There is a classical Darboux theorem that states the following. Around every point $x\in M$, there exist coordinates and a local function $\sigma$ 
such that $d\sigma=\eta$. Then, 
\begin{equation}
\Omega=e^{\sigma}\sum_{i=1}^n dq^i\wedge dp_i, \quad \eta=d\sigma=\sum_{i=1}^n \left(\frac{\partial \sigma}{\partial q^i}dq^i+\frac{\partial \sigma}{\partial p_i}dp_i\right)
\end{equation}
and, in consequence, we have
\begin{equation}
 \Lambda=e^{-\sigma}\sum_{i=1}^n \left(\frac{\partial}{\partial q^i}\wedge \frac{\partial}{\partial p_i}\right),\quad Z=e^{-\sigma}\sum_{i=1}^n\left(\frac{\partial \sigma}{\partial p_i}\frac{\partial}{\partial q^i}-\frac{\partial \sigma}{\partial q^i}\frac{\partial}{\partial p_i}\right).
\end{equation}
The Hamiltonian vector field corresponding with the function $f$ is
\begin{equation}
 X_f=\sharp_{\Lambda}(df)+fZ
\end{equation}

\subsection{Almost cosymplectic structure}

 An almost cosymplectic manifold is a $2n+1$-dimensional manifold $M$ equi\- pped with $(\eta,\Omega)$, where $\eta$ is a one-form and $\Omega$ is a two-form such that $\eta\wedge \Omega^n\neq 0$.
Therefore, we have an isomorphism of $C^{\infty}$-modules $\flat:\mathfrak{X}(M) \rightarrow \Omega^{1}(M)$ defined by
\begin{equation}\label{betamorph}
 \flat(X)=i_X\Omega+\eta(X)\eta.
\end{equation}
\begin{theorem}
 If $(M,\eta,\Omega)$ is an almost cosymplectic structure, then there exists a unique vector field $\mathcal{R}$, the so-called Reeb vector field such that
\begin{equation}\label{reebeq}
 \iota_{\mathcal{R}}\eta=1,\quad \iota_{\mathcal{R}}\Omega=0.
\end{equation}
\end{theorem}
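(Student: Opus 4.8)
The plan is to construct $\mathcal{R}$ directly from the bundle morphism $\flat$ introduced in \eqref{betamorph}, by setting $\mathcal{R}:=\flat^{-1}(\eta)$. This single definition yields a \emph{smooth}, globally defined vector field and disposes of existence and uniqueness at once, provided one checks two things: (i) that $\flat$ really is an isomorphism of $C^{\infty}(M)$-modules, and (ii) that the candidate $\mathcal{R}$ so obtained satisfies the two relations \eqref{reebeq}.

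For (i): the map $X\mapsto\iota_X\Omega+\eta(X)\eta$ is manifestly $C^{\infty}(M)$-linear, hence a vector bundle morphism between bundles of equal rank $2n+1$, so it suffices to show it is injective on each fibre. Suppose $\flat(X)=0$ at a point, i.e. $\iota_X\Omega=-\eta(X)\,\eta$. Contracting once more with $X$ and using $\iota_X\iota_X\Omega=0$ gives $\eta(X)^2=0$, so $\eta(X)=0$ and therefore $\iota_X\Omega=0$. Then, with the derivation property of $\iota_X$ and the identity $\iota_X\Omega^n=n\,(\iota_X\Omega)\wedge\Omega^{n-1}$,
\begin{equation*}
\iota_X(\eta\wedge\Omega^n)=\eta(X)\,\Omega^n-n\,\eta\wedge(\iota_X\Omega)\wedge\Omega^{n-1}=0 .
\end{equation*}
Since $\eta\wedge\Omega^n$ is a volume form, contraction with a nonzero vector cannot vanish, so $X=0$; hence $\flat$ is an isomorphism. (In particular $\eta$ is nowhere zero.)

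For (ii): put $\mathcal{R}:=\flat^{-1}(\eta)$, so that $\iota_{\mathcal{R}}\Omega+\eta(\mathcal{R})\,\eta=\eta$. Contracting with $\mathcal{R}$ and again using $\iota_{\mathcal{R}}\iota_{\mathcal{R}}\Omega=0$ gives $\eta(\mathcal{R})^2=\eta(\mathcal{R})$, so the smooth function $\eta(\mathcal{R})$ takes only the values $0$ and $1$ and, as $M$ is connected, is a constant. It cannot be identically $0$: that would force $\iota_{\mathcal{R}}\Omega=\eta$, and the same interior-product computation as in (i) yields $\iota_{\mathcal{R}}(\eta\wedge\Omega^n)=-n\,\eta\wedge\eta\wedge\Omega^{n-1}=0$, whence $\mathcal{R}=0$ and then $\eta=\flat(\mathcal{R})=0$, contradicting $\eta\wedge\Omega^n\neq0$. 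Therefore $\eta(\mathcal{R})\equiv1$, and substituting back, $\iota_{\mathcal{R}}\Omega=\eta-\eta(\mathcal{R})\,\eta=0$, which proves existence. Uniqueness is then immediate: any $\mathcal{R}'$ satisfying \eqref{reebeq} obeys $\flat(\mathcal{R}')=\iota_{\mathcal{R}'}\Omega+\eta(\mathcal{R}')\,\eta=\eta=\flat(\mathcal{R})$, and injectivity of $\flat$ forces $\mathcal{R}'=\mathcal{R}$.

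The only genuinely non-formal point is step (i): recognizing that the volume condition $\eta\wedge\Omega^n\neq0$ is precisely what makes $\flat$ nondegenerate. Everything after that is an exercise with the two identities $\iota_X\iota_X\Omega=0$ and $\iota_X\Omega^n=n\,(\iota_X\Omega)\wedge\Omega^{n-1}$, together with connectedness of $M$ to rule out the spurious value $\eta(\mathcal{R})=0$.
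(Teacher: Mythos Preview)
Your proof is correct. The paper does not actually supply a proof of this theorem: it is stated without argument, followed only by the remark ``In other words, $\mathcal{R}=\flat^{-1}(\eta)$.'' Your construction is precisely the one this remark points to, so the approaches are the same in spirit; the difference is that you have carried out what the paper leaves implicit, namely verifying that the volume condition $\eta\wedge\Omega^n\neq 0$ forces $\flat$ to be a fibrewise isomorphism, and then checking that $\flat^{-1}(\eta)$ genuinely satisfies both Reeb equations (ruling out the spurious root $\eta(\mathcal{R})=0$ via connectedness, which the paper explicitly assumes throughout). Nothing is missing or wrong.
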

In other words, $\mathcal{R}=\flat^{-1}(\eta)$.
Now, we will consider two particular classes of almost cosymplectic manifolds.

\subsubsection{Cosymplectic structure}
An almost cosymplectic structure $(M,\Omega,\eta)$ is a {\it cosymplectic structure} if $d\eta=0$, $d\Omega=0$; recall that $\Omega^n\wedge \eta\neq 0$.
A cosymplectic manifold is equipped with the $\flat$ isomorphism in \eqref{betamorph} and the Reeb vector field is retrieved as $\mathcal{R}=\flat^{-1}(\eta)=\sharp(\eta)$
and satisfies \eqref{reebeq}. In this case $\sharp=\flat^{-1}$.
 A cosymplectic manifold is a particular case of odd dimensional Poisson manifold. It gives rise to a Poisson bivector given by
 \begin{equation*}
  \Lambda(\alpha,\beta)=\Omega(\sharp(\alpha),\sharp(\beta))
 \end{equation*}

There exist Darboux coordinates $\{t,q^i,p_i\}$ on $T^{*}Q\times \mathbb{R}$ with $i=1,\dots, n$ such that 

\begin{equation}\label{cosym}
 \Omega=\sum_{i=1}^n dq^i\wedge dp_i,\quad \eta=dt,
\end{equation}
and therefore,
\begin{equation}\label{cosym2}
 \mathcal{R}=\frac{\partial}{\partial t}, \quad \Lambda=\sum_{i=1}^n \frac{\partial}{\partial q_i}\wedge \frac{\partial}{\partial p_i}.
\end{equation}
To define a Poisson structure, we consider the bivector 
\begin{equation}
\Lambda(df,dg)=df(\sharp_{\Lambda}(dg))= \{f,g\}.
\end{equation}
In this case, $\sharp_{\Lambda}\neq \sharp$.


\subsubsection{Contact structure}

An almost cosymplectic structure $(M,\Omega,\eta)$ is a {\it contact structure} if $\Omega=d\eta.$ From here, we refer to $\eta$ as a contact form and $(M,\eta)$ a contact manifold.
It is satisfied that $\eta\wedge (d\eta)^n\neq 0$ for all $x\in M$.
%
%
%

A contact manifold is a Jacobi manifold whose associated bivector $\Lambda$ is given by
\begin{equation}\label{lambdacontacto}
 \Lambda(\alpha,\beta)=d\eta(\flat^{-1}(\alpha),\flat^{-1}(\beta))=d\eta(\sharp(\alpha),\sharp(\beta))
\end{equation}
for all $\alpha,\beta \in \Omega^{1}(M)$ and $\flat:\mathfrak{X}(M)\rightarrow \Omega^{1}(M)$ is the isomorphism given by
\begin{equation}
 \flat(X)=\iota_Xd\eta+\eta(X)\eta.
\end{equation}
Here $\sharp=\flat^{-1}$.
The Reeb vector field is $\mathcal{R}=\flat^{-1}(\eta)=\sharp(\eta)$ and it is the unique vector field that satisfies \eqref{reebeq}. Then $(M,\Lambda,\mathcal{R})$ is a Jacobi manifold. Then, the bracket on a contact manifold is defined as
\begin{equation}
 \{f,g\}=\Lambda(df,dg)+f\mathcal{R}(g)-g\mathcal{R}(f).
\end{equation}

There exist coordinates $\{t,q^i,p_i\}$, with $i=1,\dots,n$, such that
\begin{equation}\label{lambdaex}
 \eta=dt-\sum_{i=1}^n p_idq^i,\quad \Lambda=\sum_{i=1}^n \left(\frac{\partial}{\partial q^i}+p_i\frac{\partial}{\partial t}\right)\wedge \frac{\partial}{\partial p_i},\quad \mathcal{R}=\frac{\partial}{\partial t}.
\end{equation}

%
In this case, $\sharp_{\Lambda}=\sharp$.

\subsection*{Structure theorem for Jacobi manifolds}

\begin{theorem}
The characteristic distribution of a Jacobi manifold $(M,\Lambda,Z)$ is completely integrable in the sense of Stefan--Sussmann \cite{Stefan, Sussmann}, thus
$M$ defines a foliation whose leaves are not necessarily of the same dimension, and it is called a {\it characteristic foliation}. Each leaf has a unique
transitive Jacobi structure such that its canonical injection into $M$ is a Jacobi map.
Each leaf defines
\begin{enumerate}
 \item A locally conformally symplectic manifold if the dimension is even.
\item A manifold equipped with a contact one-form if its dimension is odd.
\end{enumerate}

\end{theorem}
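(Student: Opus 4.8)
The plan is to follow the local structure theory of Jacobi manifolds (Kirillov \cite{kiri}): integrate the characteristic distribution, restrict the Jacobi data to a leaf, and identify the resulting transitive structure as locally conformally symplectic or contact according to parity. For the first point, take $f\equiv 1$ in $X_f=\sharp_{\Lambda}(df)+fZ$ to get $Z=X_1$, so the characteristic distribution --- which at $p$ is $\sharp_{\Lambda_p}(T^{*}_pM)+\langle Z_p\rangle$ --- is exactly the generalized distribution $\mathcal{C}$ generated by the family $\mathcal{F}=\{X_f:f\in C^{\infty}(M)\}$. From the Jacobi identity for \eqref{jacbracket} one gets the standard relation $[X_f,X_g]=X_{\{f,g\}}$, so $\mathcal{F}$ is closed under the Lie bracket; since $\mathcal{C}$ is also locally finitely generated (spanned by $Z$ and the components of $\sharp_{\Lambda}$), it is invariant under the local flows of the elements of $\mathcal{F}$, and the Stefan--Sussmann theorem \cite{Stefan, Sussmann} yields a (possibly singular) foliation of $M$ by maximal integral submanifolds $L$ with $T_pL=\mathcal{C}_p$: the characteristic foliation.

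Next I would restrict to a leaf $L$, with inclusion $\iota\colon L\hookrightarrow M$. If $\alpha\in T^{*}_pM$ annihilates $T_pL=\mathcal{C}_p\supseteq\mathrm{Im}\,\sharp_{\Lambda_p}$, the skew-symmetry of $\Lambda$ forces $\sharp_{\Lambda_p}\alpha=0$; hence $\Lambda_p\in\bigwedge^{2}T_pL$ and, together with $Z_p\in T_pL$, the pair restricts to Jacobi data $(\Lambda_L,Z_L)$ on $L$. The conditions \eqref{jaccond} are tensorial, so they pass to the restriction and $(L,\Lambda_L,Z_L)$ is a Jacobi manifold whose characteristic distribution is all of $TL$, i.e.\ the structure is transitive. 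Since functions and covectors on open subsets of $L$ extend to $M$, the identity $\Lambda_L(d(f|_L),d(g|_L))+f|_L Z_L(g|_L)-g|_L Z_L(f|_L)=\{f,g\}_M|_L$ holds; this shows both that $\iota$ is a Jacobi map and that the transitive Jacobi structure on $L$ making $\iota$ a Jacobi map is unique.

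It then remains to treat a transitive Jacobi manifold $(L,\Lambda_L,Z_L)$. If $\dim L=2m$, transitivity makes $\sharp_{\Lambda_L}$ surjective, hence (since $\ker\sharp_{\Lambda_L}=(\mathrm{Im}\,\sharp_{\Lambda_L})^{\circ}$) an isomorphism, so $\Lambda_L$ is nondegenerate and determines an almost symplectic form $\Omega$ via $\flat:=\sharp_{\Lambda_L}^{-1}$, $\flat(X)=\iota_X\Omega$. With $\eta:=\flat(Z_L)=\iota_{Z_L}\Omega$, the relation $[\Lambda_L,\Lambda_L]=2Z_L\wedge\Lambda_L$ translates into $d\Omega=\eta\wedge\Omega$, while $\mathcal{L}_{Z_L}\Lambda_L=0$ gives $\mathcal{L}_{Z_L}\Omega=0$, which with $\iota_{Z_L}\eta=\iota_{Z_L}\iota_{Z_L}\Omega=0$ in Cartan's formula forces $d\eta=0$; hence $(L,\Omega)$ is locally conformally symplectic with Lee form $\eta$. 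If $\dim L=2m+1$, transitivity forces $\mathrm{rank}\,\Lambda_L=2m$ with $Z_L\notin\mathrm{Im}\,\sharp_{\Lambda_L}$, so $\ker\sharp_{\Lambda_L}$ is a line and there is a unique one-form $\eta$ spanning it with $\eta(Z_L)=1$; one checks $\eta\wedge(d\eta)^{m}\neq 0$, that $Z_L$ is the Reeb vector field of $\eta$, and that $\Lambda_L$ equals the bivector \eqref{lambdacontacto} attached to $\eta$, so $(L,\eta)$ is a contact manifold.

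The bracket identity $[X_f,X_g]=X_{\{f,g\}}$ and the tangency/restriction bookkeeping are routine; the genuine work is the last step. The delicate points are converting the Schouten-type relations $[\Lambda_L,\Lambda_L]=2Z_L\wedge\Lambda_L$ and $\mathcal{L}_{Z_L}\Lambda_L=0$ into $d\Omega=\eta\wedge\Omega$, $d\eta=0$ in the even case, and --- harder --- obtaining the nondegeneracy $\eta\wedge(d\eta)^{m}\neq 0$ and the exact reconstruction of $\Lambda_L$ from $\eta$ in the odd case. I expect the cleanest organization to be via a Darboux-type normal form for transitive Jacobi structures in the spirit of Kirillov \cite{kiri}, which also makes the uniqueness claim of Step~2 transparent.
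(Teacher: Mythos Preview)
The paper does not actually prove this theorem: it is stated as a classical structure result (with references to \cite{Stefan,Sussmann} for the integrability and, implicitly, to \cite{kiri,Lich} for the identification of the leaves), and the paper moves directly on to the summary tables without any \texttt{proof} environment. So there is nothing to compare your argument against on the paper's side.

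That said, your outline is the standard one and is correct in substance. A few remarks. For Stefan--Sussmann, closure of $\{X_f\}$ under the Lie bracket is not by itself sufficient for a variable-rank distribution; what one really uses is that $\mathcal{C}$ is spanned by a family of vector fields which is invariant under its own local flows (equivalently, that the module of sections is locally finitely generated and involutive), and your parenthetical ``spanned by $Z$ and the components of $\sharp_{\Lambda}$'' is exactly the missing ingredient that makes this go through. In the even transitive case your derivation of $d\eta=0$ is fine once you also compute $\iota_{Z_L}d\Omega=\iota_{Z_L}(\eta\wedge\Omega)=0$, which you use implicitly. In the odd case the identification of $\Lambda_L$ with the bivector \eqref{lambdacontacto} attached to $\eta$ and the nondegeneracy $\eta\wedge(d\eta)^m\neq 0$ both come from translating $[\Lambda_L,\Lambda_L]=2Z_L\wedge\Lambda_L$ through the splitting $T_pL=\ker\eta_p\oplus\langle Z_{L,p}\rangle$; this is the computation you flag as ``harder'', and it is indeed where the work lies, but it is routine once set up (see \cite{kiri,Lich,Marle}). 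Your uniqueness argument for the induced structure on a leaf via extension of functions is also the standard one.
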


\begin{table}[H]{\footnotesize
  \noindent
\caption{{\small {\bf Geometric structures}. The space $C^{\infty}(M,\mathbb{R})$ is a local Lie algebra in the Kirillov sense and the 
morphism $\sharp_{\Lambda}:\Omega^{1}(M)\rightarrow \mathfrak{X}(M)$ is a $C^{\infty}(M,\mathbb{R})$ mapping induced by $\Lambda$ between $C^{\infty}$ modules of $\Omega^{1}$ and $\mathfrak{X}$ on $M$. $\sharp_p$ is the restriction of $\sharp_{\Lambda}$ to $T^{*}_pM$.
The flat morphism  $\flat:\mathfrak{X}(M)\rightarrow \Omega^{1}(M)$ is a mapping between $C^{\infty}$ modules; it is generally defined as $\flat(X)=\iota_X\Omega+\eta(X)\eta$. Obviously, it reduces to $\flat(X)=\iota_X\Omega$ for particular cases. Here $\alpha$ and $\beta$ are one-forms on $\Omega^{1}(M)$.}}
\label{table2}
\medskip
\noindent\hfill
\resizebox{\textwidth}{!}{\begin{minipage}{\textwidth}
\begin{tabular}{ l l l l }
 
\hline
 &&\\[-1.5ex]
 Structure&  Characterization &Bracket and h.v.f.& Induced structure \\[+1.0ex]
\hline
 &  & \\[-1.5ex]

 {\bf L.C.S.}& $d\Omega=\eta\wedge \Omega$  & $\{f,g\}=\Lambda(df,dg)+fZ(g)-gZ(f) $ &  $\Lambda(\alpha,\beta)=\Omega(\sharp(\alpha),\sharp(\beta))$   \\[+1.0ex] $(U,\Omega)$& $\mathcal{L}_Z\Omega=0$ &   $X_h=\sharp_{\Lambda}(dh)+hZ$  & $Z=\sharp(\eta)$\\[+1.0ex] $d(e^{\sigma}\Omega)=0$ & $\mathcal{L}_Z\eta=0$& $\sharp=\flat^{-1},\quad \sharp_{\Lambda}=\sharp$ & $\mathcal{C}_p=\sharp_{\Lambda}(T_p^{*}M)+\langle Z_p\rangle$\\[+1.0ex] even dim & &  & Jacobi \\[+2.0ex] 
\hline\\[0.5ex]
 {\bf Contact} & $\Omega=d\eta$ &  $\{f,g\}=\Lambda(df,dg)+f\mathcal{R}(g)-g\mathcal{R}f$ &  $\Lambda(\alpha,\beta)=d\eta(\sharp(\alpha),\sharp(\beta))$ \\[+1.0ex]$(M,\eta)$  & $\eta\wedge (d\eta)^n\neq 0$  & $X_h=\sharp_{\Lambda}(dh)+h\mathcal{R}$  & $Z=\mathcal{R}=\sharp(\eta)$ \\[+1.0ex]odd dim &&$\sharp=\flat^{-1},\quad \sharp_{\Lambda}=\sharp$ & $\mathcal{C}_p=\sharp_{\Lambda}(T_p^{*}M)+\langle \mathcal{R}_p\rangle$ \\[+1.0ex]  & & & Jacobi\\[+2.0ex]
\hline\\[0.5ex]

 {\bf Cosymplectic} &  $\Omega^n\wedge \eta\neq 0$& $\{f,g\}=\Lambda(df,dg)=df(\sharp_{\Lambda}(dg))$ & $\Lambda(\alpha,\beta)=df(\sharp_{\Lambda}(dg))$ \\[+1.0ex] $(M,\Omega,\eta)$   & $d\eta=0,d\Omega=0$  &  $X_h=\sharp (dh)$ & $Z=0, \mathcal{R}=\sharp(\eta)$ \\[+1.0ex] odd dim & & $\sharp=\flat^{-1},\quad \sharp_{\Lambda}\neq \sharp$ &$\mathcal{C}_p=\sharp(T_p^{*}M)$\\[1.0ex] & & & Poisson \\[+2.0ex]
\hline\\[0.5ex]
 {\bf Symplectic} & $(M,\Omega)$ & $\{f,g\}=\Omega(\sharp(df),\sharp(dg))$&    $\Lambda(\alpha,\beta)=\Omega(\sharp(\alpha),\sharp(\beta))$  \\[+1.0ex] $(M,\Omega)$ & $d\Omega=0$  &$X_h=\sharp_{\Lambda}(dh)$ & $Z=0,\mathcal{R}=0$ \\[+1.0ex] even dim&  $\Omega^n\neq 0$ &$\sharp=\flat^{-1},\quad \sharp_{\Lambda}=\sharp$ &$\mathcal{C}_p=\sharp_{\Lambda}(T_p^{*}M)$ \\[+2.0ex] & & & Poisson\\[+2.0ex]
\hline \\[0.5ex]

    
\end{tabular}
  \end{minipage}}
\hfill}
\end{table}

\begin{table}[H]{\footnotesize
  \noindent
\caption{{\small {\bf  Almost cosymplectic structures}. We choose canonical coordinates $\{q^i,p_i,t\}$ on $T^{*}Q\times \mathbb{R}$   }}
\label{table2}
\medskip
\noindent\hfill
\resizebox{\textwidth}{!}{\begin{minipage}{\textwidth}
\begin{tabular}{ l l l l }

\hline
 &&\\[-1.5ex]
 Structure& One-form & Reeb & Bivector and h.v.f. \\[+1.0ex]
\hline
 &  & \\[-1.5ex]
{\bf Cosymplectic}&  & &   \\[+1.0ex] $(M,\Omega,\eta)$ & $\theta_H=p_idq^î-Hdt$ & $\mathcal{R}_H=\frac{\partial}{\partial t}+\sum_{i=1}^n \frac{\partial H}{\partial p_i}\frac{\partial}{\partial q^i}-$ &  $\Lambda=\sum_{i=1}^n\frac{\partial}{\partial q^i}\wedge\frac{\partial}{\partial p_i}$   \\[+1.0ex] $\Omega^n\wedge \eta\neq 0$  &  & $-\sum_{i=1}^n \frac{\partial H}{\partial q^i}\frac{\partial}{\partial p_i}$ & $X_H=\mathcal{R}_H$ \\[2.0ex]
\hline\\[0.5ex]

{\bf Contact}&   & &     \\[+1.0ex] $(M,\eta)$  & $\eta=dt-\sum_{i=1}^n p_idq^i$ &$\qquad \mathcal{R}=\frac{\partial}{\partial t}$  & $\Lambda=\left(\frac{\partial}{\partial q^i}+p_i\frac{\partial}{\partial t}\right)\wedge \frac{\partial}{\partial p_i}$ \\[1.0ex] $\eta\wedge (d\eta)^n\neq 0$&  &  & $X_H=\sum_{i=1}^n \left(p_i\frac{\partial H}{\partial p_i}-H\right)\frac{\partial}{\partial t}$  \\[+1.0ex] $\Omega=d\eta$& & & $-\sum_{i=1}^n\left(p_i\frac{\partial H}{\partial t}+\frac{\partial H}{\partial q^i}\right)\frac{\partial}{\partial p_i}+\frac{\partial H}{\partial p_i}\frac{\partial}{\partial q^i}$ \\[2.0ex]
\hline\\

    
\end{tabular}
  \end{minipage}}
\hfill}
\end{table}

\section{Lagrangian--legendrian submanifolds}

Let $(M,\Lambda,Z)$ be a Jacobi manifold with characteristic distribution $\mathcal{C}$.

\begin{definition}
 A submanifold $N$ of a Jacobi manifold $(M,\Lambda,Z)$ is said to be a {\it lagrangian-legendrian submanifold} if the following equality holds
\begin{equation}
 \sharp (TN^{\circ})=TN \cap \mathcal{C},
\end{equation}
where $TN^{\circ}$ denotes the annihilator of $TN$.

\end{definition}

If $(M,\Lambda)$ is a Poisson manifold, the lagrangian-legendrian submanifold of $M$ will simply be called lagrangian.

\subsubsection*{Particular cases}
\begin{enumerate}
\item A submanifold $N$ of a symplectic manifold $(M,\Omega)$ is {\it lagrangian} if
\begin{equation}
 \sharp_{\Lambda}(TN^{\circ})=TN.
\end{equation}

 \item As a consequence, we deduce that a submanifold $N$ of a cosymplectic manifold $(M,\eta,\Omega)$ is {\it lagrangian} if
\begin{equation}
 \sharp_{\Lambda} (TN^{\circ})=TN\cap \mathcal{C}.
\end{equation}

\item A submanifold $N$ of a contact manifold $(M,\eta)$ is {\it legendrian} if the following condition is fulfillled
\begin{equation}
 \sharp_{\Lambda} (TN^{\circ})=TN.
\end{equation}
\end{enumerate}
\noindent
The following result gives a characterization of legendrian submanifolds of contact manifolds.

\begin{proposition}
A submanifold $N$ of a contact manifold $(M,\eta)$ is a {\it legendrian} submanifold if and only if it is an integral manifold of maximal dimension $n$ of the
distribution $\eta=0$. In this case, $\mathcal{C}$ is the whole tangent space to $N$.
\end{proposition}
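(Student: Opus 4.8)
The plan is to work in the local Darboux coordinates $\{t,q^i,p_i\}$ of \eqref{lambdaex}, where $\eta=dt-\sum_i p_i\,dq^i$, $\mathcal{R}=\partial/\partial t$, and the characteristic distribution $\mathcal{C}$ of the contact Jacobi structure is the whole tangent bundle $TM$ (the structure is transitive, since $\sharp_\Lambda$ together with $\mathcal{R}$ already spans $TM$ — indeed $\sharp_\Lambda(dq^i)=\partial/\partial p_i$, $\sharp_\Lambda(dp_i)=-\partial/\partial q^i-p_i\partial/\partial t$, and $\sharp(\eta)=\mathcal R=\partial/\partial t$). Because $\mathcal{C}=TM$, the defining condition $\sharp(TN^\circ)=TN\cap\mathcal{C}$ for a legendrian submanifold $N$ collapses to $\sharp(TN^\circ)=TN$; the final sentence of the statement (``$\mathcal{C}$ is the whole tangent space to $N$'') is then immediate once $N$ is shown to be legendrian, so the substance is the equivalence with being a maximal integral manifold of the distribution $\eta=0$.

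First I would establish the dimension bookkeeping. Since $\flat$ (equivalently $\sharp=\flat^{-1}$) is an isomorphism, for any submanifold $N$ of dimension $k$ in the $(2n+1)$-manifold $M$ we have $\dim\sharp(TN^\circ)=\dim TN^\circ=2n+1-k$. Hence $\sharp(TN^\circ)=TN$ forces $2n+1-k=k$ to fail unless we are careful: in fact the right reading is $\sharp(TN^\circ)\subseteq TN$, and a dimension count then gives $2n+1-k\le k$, i.e. $k\ge n+1$ — but one must also use that the Jacobi/contact bracket is degenerate along $\mathcal{R}$. The cleaner route is: show that $\sharp(TN^\circ)\subseteq TN$ is equivalent to $\eta|_{TN}=0$ together with $d\eta|_{TN}=0$, and that the distribution $\ker\eta$ has rank $2n$ with $d\eta$ nondegenerate on it (this is exactly the contact condition $\eta\wedge(d\eta)^n\neq0$), so its maximal integral submanifolds — the legendrian ones — have dimension exactly $n$. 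Concretely: $X\in\ker\eta$ and $\iota_X d\eta$ annihilates $\ker\eta$ iff $X\in\ker\eta$ and $\iota_X d\eta\in\langle\eta\rangle$; combined with $\eta(X)=0$ this pins down $\sharp(TN^\circ)$ as the $\flat$-image computation shows, and then $\sharp(TN^\circ)\subseteq TN$ says precisely that $TN\subseteq\ker\eta$ and $TN$ is $d\eta$-isotropic, i.e. $N$ is an integral manifold of $\eta=0$ that is moreover coisotropic-from-above; maximality ($\dim N=n$) is what upgrades the inclusion to equality.

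The key steps in order: (i) compute $\flat$ and its inverse $\sharp$ explicitly in the coordinates \eqref{lambdaex}, and identify $\sharp(\alpha)$ for a general one-form $\alpha=\alpha_0\,dt+\sum(\alpha_i\,dq^i+\beta^i\,dp_i)$; (ii) for a submanifold $N$, express $TN^\circ$ and compute $\sharp(TN^\circ)$, showing it is contained in $TN$ exactly when every $\alpha\in TN^\circ$ has $\sharp(\alpha)\in TN$, which translates into the two conditions $\eta(TN)=0$ and $d\eta(TN,TN)=0$; (iii) invoke the contact nondegeneracy $\eta\wedge(d\eta)^n\neq0$ to conclude that a submanifold on which $\eta$ and $d\eta$ vanish has dimension at most $n$, hence ``integral manifold of $\eta=0$ of maximal dimension $n$'' is the same as this isotropy condition with the dimension saturated; (iv) run the dimension count $\dim\sharp(TN^\circ)=2n+1-\dim N$ to see that equality $\sharp(TN^\circ)=TN$ holds iff $\dim N=n$, closing the equivalence in both directions; (v) observe $\mathcal{C}=TM$ so $TN\cap\mathcal{C}=TN$, giving the last assertion.

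The main obstacle I anticipate is step (ii): correctly handling the Jacobi (as opposed to symplectic) nature of the structure, namely that $\sharp=\flat^{-1}$ with $\flat(X)=\iota_X d\eta+\eta(X)\eta$ mixes the $d\eta$-contraction with the $\eta$-direction, so one must be careful that ``$\sharp(TN^\circ)\subseteq TN$'' genuinely decouples into the clean pair $\eta|_{TN}=0$ and $(d\eta)|_{TN}=0$ rather than something weaker; verifying this decoupling — and that the Reeb direction plays no spoiling role because $\mathcal{R}$ is transverse to $\ker\eta$ — is where the care is needed. Everything else is the standard linear algebra of contact forms plus the dimension count.
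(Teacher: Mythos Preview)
Your proposal has a genuine gap stemming from a conflation of the two sharp maps. The legendrian condition in the paper's particular case for contact reads $\sharp_{\Lambda}(TN^{\circ})=TN$, with $\sharp_{\Lambda}$ the bivector-induced map, \emph{not} $\sharp=\flat^{-1}$. You use $\sharp=\flat^{-1}$ from the second paragraph onward, and you yourself notice the consequence: since $\sharp$ is an isomorphism, $\dim\sharp(TN^{\circ})=2n+1-k$, and $2n+1-k=k$ has no integer solution. Your attempted repair --- downgrading the condition to an inclusion $\sharp(TN^{\circ})\subseteq TN$ --- is not the definition and in fact fails already for the model legendrian $N=\{t=0,\ p_i=0\}$: there $dt\in TN^{\circ}$, but $\sharp(dt)=\mathcal{R}=\partial_t\notin TN=\mathrm{span}\{\partial_{q^i}\}$. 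So steps (ii) and (iv) as written cannot go through.

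The fix is that $\sharp_{\Lambda}$ is \emph{not} an isomorphism: your own first-paragraph computations give $\sharp_{\Lambda}(\eta)=0$ and show $\mathrm{Im}\,\sharp_{\Lambda}\subseteq\ker\eta$, whence $\ker\sharp_{\Lambda}=\langle\eta\rangle$ and $\mathrm{Im}\,\sharp_{\Lambda}=\ker\eta$. Then $\sharp_{\Lambda}(TN^{\circ})=TN$ immediately forces $TN\subseteq\ker\eta$, i.e.\ $\eta|_{N}=0$; consequently $\eta\in TN^{\circ}$, so $\dim\sharp_{\Lambda}(TN^{\circ})\le (2n+1-k)-1=2n-k$, and equality with $k=\dim TN$ gives $k=n$. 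This is exactly the paper's route: from $\sharp_{\Lambda}(TN^{\circ})=TN$ deduce $\eta|_{N}=0$, then observe that the remaining content of the condition is precisely that $T_xN$ is a lagrangian subspace of the symplectic vector space $(\ker\eta_x,(d\eta)_x)$, which forces $\dim N=n$; the converse reverses these steps. Your coordinate programme becomes correct if you systematically replace $\sharp$ by $\sharp_{\Lambda}$ from step (ii) on, but the paper's invariant argument via the symplectic hyperplane $\ker\eta$ is shorter and needs no Darboux chart.
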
 

\begin{proof}
 Assume that $M$ has dimension $2n+1$. If a submanifold $N$ of $M$ is legendrian then the condition
\begin{equation*}
 \sharp_{\Lambda}(TN^{\circ})=TN
\end{equation*}
implies that $\eta|_{N}=0$. Moreover, $N$ has necessary dimension $n$, since $T_xN$ will be a lagrangian subspace
of the symplectic vector space $(\ker{\eta}_x,(d\eta)_x)$ for all $x\in N$. The converse is proved reversing the arguments.

\end{proof}

%
%

\section{Hamilton--Jacobi theory on cosymplectic manifolds}

\subsection{Geometric approach}
Consider the extended phase space $T^{*}Q\times \mathbb{R}$ and its canonical projections of the first and second factor, $\rho:T^{*}Q\times \mathbb{R}\rightarrow T^{*}Q$ and  $t:T^{*}Q\times \mathbb{R}\rightarrow \mathbb{R}$, respectively
 and a time-dependent hamiltonian $H:T^{*}Q\times \mathbb{R}\rightarrow \mathbb{R}$. Let us depict the problem with a diagram

\[
\xymatrix{ T^{*}Q\times \mathbb{R}
\ar[dd]^{\rho}  \ar[ddrr]^{t}  \ar@/^2pc/[ddrr]^{H}  &   &\\
  &  & &\\
T^{*}Q & & \mathbb{R}}
\]

\bigskip
\noindent
We have canonical coordinates $\{q^i,p_i,t\}$ with $i=1,\dots,n$, where $(q^i,p_i)$ are fibered coordinates in $T^{*}Q$ and $t\in \mathbb{R}$.
We consider the two-form on $T^{*}Q\times \mathbb{R}$ as $\Omega_H=-d\theta_H$
and 
\begin{equation}\label{thetah}
\theta_H=\theta_Q-Hdt
\end{equation}
where $\theta_Q$ is the canonical Liouville one-form. 
We abuse notation by identifying the pullbacks of the one-forms with the one-forms themselves. That is, $\rho^{*}(\theta_Q)=\theta_Q$.
Hence, 
\begin{equation}\label{oh}
\Omega_H=\sum_{i=1}^n dq^i\wedge dp_i+dH\wedge dt.
\end{equation}
Let us consider the cosymplectic structure $(dt,\Omega_H).$
The corresponding Reeb vector field needs to satisfy 
\begin{equation}\label{reebc}
\iota_{\mathcal{R}_H}dt=1,\quad \iota_{\mathcal{R}_H}\Omega_H=0.
\end{equation}
The unique Reeb vector field that satisfies \eqref{reebc} has the following expression in coordinates
\begin{equation}\label{reebcosym}
 \mathcal{R}_H=\frac{\partial}{\partial t}+\sum_{i=1}^n\frac{\partial H}{\partial p_i}\frac{\partial}{\partial q^i}-\sum_{i=1}^n \frac{\partial H}{\partial q^i}\frac{\partial}{\partial p_i}.
\end{equation}

The corresponding classical Hamilton--Jacobi equations are
\begin{equation}\label{hamileq22}
\left\{\begin{aligned}
 {\dot q}^i&=\frac{\partial H}{\partial p_i},\\
 {\dot p}_i&=-\frac{\partial H}{\partial q^i},\qquad \forall i=1,\dots,n. \\
{\dot t}&=1.
 \end{aligned}\right.
 \end{equation}
\noindent
Since $\dot{t}=1$, we can consider $t$ a time-parameter (up to an affine change).

\noindent
We consider the fibration
$\pi: T^{*}Q\times \mathbb{R}\rightarrow Q\times \mathbb{R}$ and a section $\gamma$ of $\pi:T^{*}Q\times \mathbb{R} \rightarrow Q\times \mathbb{R}$, i.e., $\pi\circ \gamma=\text{id}_{Q\times \mathbb{R}}$. 
Also, we assume that $\text{Im}(\gamma_t)$ with $\gamma_t:Q\rightarrow T^{*}Q\times \mathbb{R}$ such that
$\gamma_t(q^i)$ in coordinates $(q^i,\gamma^i(q^i,t),t)$ is a lagrangian submanifold for a fixed time $t$ of the cosymplectic manifold $(T^{*}Q\times \mathbb{R},dt,\Omega_H)$ for a fixed time, that is $d\gamma_t=0$.
%

\[
\xymatrix{
 T^{*}Q\times \mathbb{R}\ar[rr]^{\rho} & &  T^{*}Q\ar[rr]^{\pi} & &  Q\times \mathbb{R}\ar[rr] & &  Q\ar@/^2pc/[llllll]^{\gamma_t}}
\]

We can use $\gamma$ to project $\mathcal{R}_H$ on $Q\times \mathbb{R}$
just defining a vector field $\mathcal{R}^{\gamma}_H$, the denominated {\it projected} vector field on $Q\times \mathbb{R}$ by
\begin{equation}\label{hjr}
 \mathcal{R}^{\gamma}_H=T_{\pi}\circ \mathcal{R}_H\circ \gamma
\end{equation}
The following diagram summarizes the above construction
\[
\xymatrix{ T^{*}Q\times \mathbb{R}
\ar[dd]^{\pi} \ar[rrr]^{\mathcal{R}_H}&   & &T(T^{*}Q\times \mathbb{R})\ar[dd]^{T_{\pi}}\\
  &  & &\\
Q\times \mathbb{R} \ar@/^2pc/[uu]^{\gamma}\ar[rrr]^{\mathcal{R}^{\gamma}_H}&  & & T(Q\times \mathbb{R})}
\]
\begin{definition}
 If $\alpha$ is a one-form, locally expressed as $\alpha=\sum_{i=1}^n\alpha_i dq^i$, we designate by $\alpha^{V}$ the {\it vertical lift} \cite{yanoishi}
or vector fields associated with $\alpha$, defined by
\begin{equation*}
 \iota_{\alpha^V}\omega_Q=\alpha
\end{equation*}
Hence, the vector field $\alpha^V$ has the local expression
\begin{equation}
\alpha^{V}=-\sum_{i=1}^n\alpha_i\frac{\partial}{\partial p_i}.
\end{equation}
\end{definition}

\begin{theorem}
 The vector fields $\mathcal{R}_H$ and $\mathcal{R}^{\gamma}_H$ are $\gamma$-related if and only if the following equation is satisfied
\begin{equation}\label{eqtheorem2}
[d(H\circ \gamma_t)]^{V}=\dot{\gamma}_q
\end{equation}
where $[\dots]^{V}$ denotes the vertical lift of a one-form on $Q$ to $T^{*}Q$.  Now $\dot{\gamma}_q$ is the tangent vector in a point $q$ associated with the curve
\[
\xymatrix{
 \mathbb{R}\ar[rr]\ar@/^2pc/[rrrrrr]^{\gamma_q} & &  Q\times \mathbb{R}\ar[rr] & &  T^{*}Q\times \mathbb{R}\ar[rr]^{\rho} & &  T^{*}Q}
\]
Notice that these applications are given for a fixed point $q\rightarrow (q,t,\gamma)$.

\end{theorem}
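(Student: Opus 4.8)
The strategy is to unwind the definition of $\gamma$-relatedness and then carry out a direct computation in the canonical chart $\{q^i,p_i,t\}$ on $T^{*}Q\times\mathbb{R}$, in which $\gamma$ reads $\gamma(q^i,t)=(q^i,\gamma_i(q,t),t)$ and the standing assumption that $\mathrm{Im}(\gamma_t)$ be lagrangian for each fixed $t$ is the symmetry $\partial\gamma_i/\partial q^j=\partial\gamma_j/\partial q^i$, i.e. $d\gamma_t=0$ (so that locally $\gamma_i=\partial W/\partial q^i$).

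First I would note that $\mathcal{R}_H$ and $\mathcal{R}^{\gamma}_H$ being $\gamma$-related means $T\gamma\circ\mathcal{R}^{\gamma}_H=\mathcal{R}_H\circ\gamma$. Since $\mathcal{R}^{\gamma}_H=T\pi\circ\mathcal{R}_H\circ\gamma$ and $\pi\circ\gamma=\mathrm{id}$, formula \eqref{reebcosym} gives $\mathcal{R}^{\gamma}_H=\partial/\partial t+\sum_i(\partial H/\partial p_i)\,\partial/\partial q^i$ along $\gamma$, and the $\partial/\partial t$- and $\partial/\partial q^i$-components of $T\gamma\circ\mathcal{R}^{\gamma}_H$ and of $\mathcal{R}_H\circ\gamma$ agree automatically. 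Hence $\gamma$-relatedness is equivalent to the equality of the vertical $\partial/\partial p_j$-components. Using $T\gamma(\partial/\partial q^i)=\partial/\partial q^i+\sum_j(\partial\gamma_j/\partial q^i)\,\partial/\partial p_j$ and $T\gamma(\partial/\partial t)=\partial/\partial t+\sum_j(\partial\gamma_j/\partial t)\,\partial/\partial p_j$, the vertical part of $T\gamma\circ\mathcal{R}^{\gamma}_H$ is $\sum_j\big(\partial\gamma_j/\partial t+\sum_i(\partial H/\partial p_i)\,\partial\gamma_j/\partial q^i\big)\,\partial/\partial p_j$, whereas that of $\mathcal{R}_H\circ\gamma$ is $-\sum_j(\partial H/\partial q^j)\,\partial/\partial p_j$, all derivatives of $H$ evaluated at $(q,\gamma(q,t),t)$. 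So $\gamma$-relatedness amounts to $\partial\gamma_j/\partial t+\sum_i(\partial H/\partial p_i)\,\partial\gamma_j/\partial q^i=-\partial H/\partial q^j$ for every $j$.

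Next I would expand \eqref{eqtheorem2}. By the chain rule $d(H\circ\gamma_t)=\sum_j\big(\partial H/\partial q^j+\sum_i(\partial H/\partial p_i)\,\partial\gamma_i/\partial q^j\big)\,dq^j$, so the vertical lift (computed with $\omega_Q=\sum dq^i\wedge dp_i$) is $[d(H\circ\gamma_t)]^{V}=-\sum_j\big(\partial H/\partial q^j+\sum_i(\partial H/\partial p_i)\,\partial\gamma_i/\partial q^j\big)\,\partial/\partial p_j$, while $\dot{\gamma}_q$, the velocity at $t$ of the curve $s\mapsto(q,\gamma(q,s))$ in $T^{*}Q$, is $\sum_j(\partial\gamma_j/\partial t)\,\partial/\partial p_j$. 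Thus \eqref{eqtheorem2} is equivalent to $\partial\gamma_j/\partial t=-\partial H/\partial q^j-\sum_i(\partial H/\partial p_i)\,\partial\gamma_i/\partial q^j$ for every $j$.

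Finally, comparing the condition for $\gamma$-relatedness with \eqref{eqtheorem2}, they differ only in the mixed term $\sum_i(\partial H/\partial p_i)\,\partial\gamma_j/\partial q^i$ versus $\sum_i(\partial H/\partial p_i)\,\partial\gamma_i/\partial q^j$; these coincide exactly because $\partial\gamma_j/\partial q^i=\partial\gamma_i/\partial q^j$, i.e. because $d\gamma_t=0$, which closes the equivalence. The argument is essentially bookkeeping: the only points that require care are reading $\dot{\gamma}_q$ as a vertical vector along $\mathrm{Im}(\gamma_t)$ in $T^{*}Q$, keeping the partials of $H$ evaluated along $\gamma$, and invoking the lagrangian condition precisely at the step where it symmetrizes the mixed term. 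No substantial obstacle arises beyond this.
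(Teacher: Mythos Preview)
Your proposal is correct and follows essentially the same route as the paper's own proof: write $\gamma$-relatedness as $T\gamma(\mathcal{R}_H^{\gamma})=\mathcal{R}_H\circ\gamma$, compute $T\gamma(\partial/\partial t)$ and $T\gamma(\partial/\partial q^i)$ in the canonical chart, reduce to the vertical components, and invoke $d\gamma_t=0$ to symmetrize the mixed term $\partial\gamma_j/\partial q^i=\partial\gamma_i/\partial q^j$. The paper's argument is considerably terser (it simply says the result is ``straightforward'' once one uses the closedness of $\gamma_t$ ``for the permutation of indices''), whereas you spell out explicitly both sides of \eqref{eqtheorem2} and identify exactly where the lagrangian hypothesis enters; but the strategy and the key step are the same.
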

\begin{proof}
 The vector fields $\mathcal{R}_H$ and $\mathcal{R}^{\gamma}_H$ are $\gamma$ related if $T\gamma(\mathcal{R}^{\gamma}_H)=\mathcal{R}_H$. That is,
\begin{equation}\label{transf11}
T\gamma(\mathcal{R}^{\gamma}_H)=T\gamma\left(\frac{\partial}{\partial t}+\sum_{i=1}^n\frac{\partial H}{\partial p_i}\frac{\partial}{\partial q^i}\right)
\end{equation}
We choose a section $\gamma(q^i,t)$ in coordinates $(q^i,\gamma^j(q^i,t),t)$ with $i,j=1,\dots,n$ such that the lift in the tangent bundle reads,
\begin{equation}\label{tangent2}
T\gamma\left(\frac{\partial}{\partial t}\right)=\frac{\partial}{\partial t}+\sum_{j=1}^n\frac{\partial \gamma^j}{\partial t}\frac{\partial}{\partial p_j},\quad T\gamma\left(\frac{\partial}{\partial q^i}\right)=\frac{\partial}{\partial q^i}+\sum_{j=1}^n\frac{\partial \gamma^j}{\partial q^i}\frac{\partial}{\partial p_j} 
\end{equation}
Introducing equations \eqref{tangent2} in equation \eqref{transf11}, it is straightforward to retrieve condition \eqref{eqtheorem2} if we use that 
$\gamma_t$ is closed. The closedness condition is necessary for the permutation of indices in intermediate steps to obtain \eqref{eqtheorem2}.
\end{proof}
Equation \eqref{eqtheorem2} is known as a {\it Hamilton--Jacobi equation on a cosymplectic manifold.}
In local coordinates $\{q^i,p_i,t\}$, we have
\begin{equation}\label{lc}
 \frac{\partial \gamma^j}{\partial t}+\sum_{i=1}^n \frac{\partial H}{\partial p_i}\frac{\partial \gamma^j}{\partial q^i}+\frac{\partial H}{\partial q^j}=0.
\end{equation}

\subsection{Complete solutions}

\begin{definition}
 A {\it complete solution} of the Hamilton--Jacobi equation on a cosymplectic manifold $(M,\eta,\Omega)$ 
 is a diffeomorphism $\Phi:Q\times \mathbb{R}\times \mathbb{R}^n\rightarrow T^{*}Q\times \mathbb{R}\times \mathbb{R}^n$ such that for a set of
 parameters $\lambda\in \mathbb{R}^n, \lambda=(\lambda_1,\dots,\lambda_n)$, the mapping
 
 \begin{equation}\label{compsolHJ}
 \begin{array}{ccc}
  \Phi_{\lambda}:Q\times \mathbb{R}& \rightarrow &  T^{*}Q\times \mathbb{R}  \\
  \Phi_{\lambda}(q,t) &\mapsto &  \Phi(q,\gamma(q,t),t)
 \end{array}
 \end{equation}
\noindent
is a solution of the Hamilton--Jacobi equation.

\medskip

We have the following diagram

\[
\xymatrix{ Q\times \mathbb{R}\times \mathbb{R}^n
\ar[dd]^{\alpha} \ar[rrr]^{\Phi}&   & &T^{*}Q\times \mathbb{R}\ar[dd]^{f_i}\ar[lll]^{\Phi^{-1}}\\
  &  & &\\
 \mathbb{R}^n \ar[rrr]^{\pi_i}&  & & \mathbb{R}}
\]

%

\noindent
with $\pi_i:\mathbb{R}^n\rightarrow \mathbb{R}$ the projection of $(\lambda_1,\dots,\lambda_n)$ to $\lambda_i$.
We define functions $f_i$ such that for a point $p\in T^{*}Q\times \mathbb{R}$, it is satisfied
\begin{equation}\label{functions8}
 f_i(p)=\pi_i\circ \alpha\circ \Phi^{-1}(p).
\end{equation}
and $\alpha:Q\times \mathbb{R}\times \mathbb{R}^n\rightarrow \mathbb{R}^n$ is the canonical projection.

\begin{theorem}
 If $\Phi$ is a complete solution of the Hamilton--Jacobi problem on a cosymplectic manifold, then the functions defined in \eqref{functions8} commute
 with respect to a Poisson bracket, that is,
 \begin{equation}
  \{f_i,f_j\}=0,\quad \forall i,j=1,\dots, n.
 \end{equation}

\end{theorem}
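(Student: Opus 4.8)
The plan is to identify, at each fixed value of the time, the common level set of the functions $f_{1},\dots,f_{n}$ with one of the lagrangian submanifolds produced by the complete solution, and then to read the vanishing of the bracket off the lagrangian condition $\sharp_{\Lambda}(TN^{\circ})=TN\cap\mathcal{C}$ together with the fact that $\sharp_{\Lambda}$ annihilates the structure one-form $dt$.

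First I would unwind the definitions. Since $\Phi$ is a diffeomorphism and, for each $\lambda$, the partial map $\Phi_{\lambda}$ is a section of $\pi\colon T^{*}Q\times\mathbb{R}\to Q\times\mathbb{R}$ of the form $(q,t)\mapsto(q,\gamma_{\lambda}(q,t),t)$, its inverse has the form $\Phi^{-1}(q,p,t)=(q,t,\mu(q,p,t))$, where $\mu$ is the smooth $\mathbb{R}^{n}$-valued function obtained by solving $p=\gamma_{\lambda}(q,t)$ for $\lambda$; hence $f_{i}(q,p,t)=\mu_{i}(q,p,t)$, and for each fixed $t$ the assignment $(q,p)\mapsto(q,\mu(q,p,t))$ is a diffeomorphism of $T^{*}Q$. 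In particular the Jacobian of $(\mu_{1},\dots,\mu_{n})$ with respect to $(q,p)$ has maximal rank $n$, so $df_{1},\dots,df_{n},dt$ are pointwise linearly independent on $M:=T^{*}Q\times\mathbb{R}$. Now fix $m\in M$, put $t_{0}=t(m)$ and $\lambda=(f_{1}(m),\dots,f_{n}(m))$, and let $N:=\mathrm{Im}((\gamma_{\lambda})_{t_{0}})$ be the image of the section $q\mapsto(q,\gamma_{\lambda}(q,t_{0}),t_{0})$. Then $m\in N$ and, locally, $N=\{f_{1}=\lambda_{1},\dots,f_{n}=\lambda_{n},\,t=t_{0}\}$, so $\dim N=n$; since $\dim TN^{\circ}=(2n+1)-n=n+1$, the independence just noted forces $TN^{\circ}=\langle df_{1},\dots,df_{n},dt\rangle$ at every point of $N$.

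Next I would bring in the two geometric inputs. Because $\Phi_{\lambda}$ is a solution of the Hamilton--Jacobi equation we have $d(\gamma_{\lambda})_{t_{0}}=0$ (closedness of the section at fixed time being part of what it means to be such a solution), whence $N$ is a lagrangian submanifold of the cosymplectic manifold $(M,dt,\Omega_{H})$, and the defining identity gives $\sharp_{\Lambda}(TN^{\circ})=TN\cap\mathcal{C}\subseteq TN$. On the other hand $\sharp_{\Lambda}(dt)=0$: indeed $\sharp(dt)=\mathcal{R}_{H}$ and $\iota_{\mathcal{R}_{H}}\Omega_{H}=0$, so $\Lambda(dt,\beta)=\Omega_{H}(\sharp(dt),\sharp(\beta))=\Omega_{H}(\mathcal{R}_{H},\sharp(\beta))=0$ for every one-form $\beta$. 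Combining the two, $\sharp_{\Lambda}(TN^{\circ})=\langle\sharp_{\Lambda}(df_{1}),\dots,\sharp_{\Lambda}(df_{n})\rangle\subseteq TN$, i.e. $\sharp_{\Lambda}(df_{k})\in TN$ for every $k$.

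Finally, since every $f_{k}$ is constant on $N$, each $df_{k}$ annihilates $TN$; together with $\sharp_{\Lambda}(df_{j})\in TN$ this gives $\{f_{i},f_{j}\}(m)=\Lambda(df_{i},df_{j})(m)=\langle df_{i},\sharp_{\Lambda}(df_{j})\rangle(m)=0$, and as $m$ is arbitrary, $\{f_{i},f_{j}\}\equiv 0$. This is the cosymplectic counterpart of the classical statement that a complete integral of Hamilton--Jacobi produces Poisson-commuting first integrals; the argument could alternatively be phrased by freezing $t$ and invoking the classical symplectic fact on $(T^{*}Q,\omega_{Q})$ directly, the cosymplectic formulation having the advantage of using only the lagrangian-submanifold machinery already set up, with the single extra ingredient $\sharp_{\Lambda}(dt)=0$. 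The only steps requiring care are the bookkeeping that extracts $f_{i}=\mu_{i}$ from the definition of a complete solution and the verification that $df_{1},\dots,df_{n},dt$ are independent — everything else is a short deduction; the presence of the extra form $dt$, i.e. of the Reeb direction (transverse to the slices on which the level sets are taken), is precisely what makes $\sharp_{\Lambda}(dt)=0$ the load-bearing observation.
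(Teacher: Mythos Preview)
Your proof is correct and follows the same strategy as the paper's: identify a level set of the $f_i$ with a lagrangian submanifold $N$, note that $df_i\in TN^{\circ}$, and read off $\{f_i,f_j\}=df_i(\sharp_{\Lambda}(df_j))=0$ from the lagrangian condition $\sharp_{\Lambda}(TN^{\circ})\subseteq TN\cap\mathcal{C}\subseteq TN$.

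The one technical difference is your choice of $N$. You work with the fixed-time slice $N=\mathrm{Im}\bigl((\gamma_{\lambda})_{t_0}\bigr)$, which is $n$-dimensional with $TN^{\circ}=\langle df_1,\dots,df_n,dt\rangle$, and therefore need the auxiliary observation $\sharp_{\Lambda}(dt)=0$ to conclude $\sharp_{\Lambda}(df_j)\in TN$. The paper instead takes the full $(n{+}1)$-dimensional image $\mathrm{Im}(\Phi_{\lambda})=\bigcap_i f_i^{-1}(\lambda_i)$, whose annihilator is exactly $\langle df_1,\dots,df_n\rangle$, so that step becomes unnecessary. Your variant has the advantage of invoking precisely the lagrangian hypothesis as stated in the setup (closedness of $\gamma_t$ at each fixed $t$), and of spelling out the linear independence of $df_1,\dots,df_n,dt$; the paper's variant is slightly shorter but silently asserts that the full image is lagrangian without checking it against the fixed-time hypothesis.
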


\begin{proof}
 It is immediate that
 \begin{equation}
  \text{Im}(\Phi_{\lambda})=\cap_{i=1}^n f_i^{-1}(\lambda_i)
 \end{equation}
 An element in $\text{Im}(\Phi_{\lambda})$ will be $\Phi_{\lambda}(x)$, for a point $x\in Q\times \mathbb{R}$ and it happens
 \begin{equation*}
  f_i(\Phi_{\lambda}(x))=f_i(\Phi(x,\lambda))=\lambda_i.
 \end{equation*}
 If $f_i$ is constant on $\text{Im}(\Phi_{\lambda})$, then $df_i$ vanishes on $T(\text{Im}(\Phi_{\lambda}))$. 
 If $\Phi_{\lambda}$ is a solution of the Hamilton--Jacobi equation, then  $\text{Im}\Phi_{\lambda}$ is a lagrangian submanifold and we have that
 \begin{equation*}
  \sharp_{\Lambda} (T(\text{Im}(\Phi_{\lambda})))^{\circ}=T(\text{Im}(\Phi_{\lambda}))\cap \mathcal{C}
 \end{equation*}
 that implies 
 \begin{equation}
  \{f_i,f_j\}=0,\quad \forall i,j=1,\dots,n.
 \end{equation}
\noindent
Because of the definition of the bracket $\{f_i,f_j\}=df_i(\sharp_{\Lambda}(df_j))$ and the definition of $\Lambda$ in \eqref{cosym2}, we deduce
that it is zero since $df_i$ is in the annihilator of $T(\text{Im}(\Phi_{\lambda}))$ and $\sharp_{\Lambda}(df_j)$ is in $T(\text{Im}(\Phi_{\lambda}))\cap \mathcal{C}$.

\end{proof}

\end{definition}

\subsection{Examples}

\subsubsection*{A trigonometric system}
Let us consider the time-dependent hamiltonian on $T^{*}Q\times \mathbb{R}$ with the local set of coordinates $\{q,p,t\}$
\begin{equation}
 H=\frac{p^2}{2}+\frac{q^2}{2}+\alpha \sin{(wt)}\frac{q^2p^2}{2}.
\end{equation}
In our setting, we consider the cosymplectic manifold $(T^{*}Q\times \mathbb{R},\Omega_H, \theta_H)$ where $\Omega_H$ and $\theta_H$ are
those given in \eqref{oh} in \eqref{thetah}, correspondingly.
The Reeb vector field according to the conditions \eqref{reebeq} has the expression
\begin{equation}
 \mathcal{R}_H=\frac{\partial}{\partial t}+\left(p+\alpha\sin{(wt)}q^2p\right)\frac{\partial}{\partial q}-\left(q+\alpha\sin{(wt)}p^2q\right)\frac{\partial}{\partial p}.
\end{equation}
We choose a lagrangian section $\gamma(q,t)$ whose components are $(q,\gamma(q,t),t)$.
The $\mathcal{R}_H^{\gamma}$ field on $Q\times \mathbb{R}$ is
\begin{equation}
 \mathcal{R}^{\gamma}_H=\frac{\partial}{\partial t}+\left(p+\alpha\sin{(wt)}q^2p\right)\frac{\partial}{\partial q}. 
\end{equation}
If we impose \eqref{hjr} to be fulfillled, we need to compute the terms
\begin{equation}
 T\gamma\left(\frac{\partial}{\partial t}\right)=\frac{\partial}{\partial t}+\frac{\partial \gamma}{\partial t}\frac{\partial}{\partial p},\quad T\gamma\left(\frac{\partial}{\partial q}\right)=\frac{\partial}{\partial q}+\frac{\partial \gamma}{\partial q}\frac{\partial}{\partial p},
\end{equation}
and the arising equation reads
\begin{equation}
 \frac{\partial \gamma}{\partial t}+\left(p+\alpha\sin{(wt)}q^2p\right)\frac{\partial \gamma}{\partial q}=q+\alpha\sin{(wt)}p^2q.
\end{equation}
This equation is a quasi-linear first-order PDE for a function $\gamma(q,t)$. It can be solved with the aid of the method of characteristics \cite{evans}
\begin{equation}
 dt=\frac{dq}{p+\alpha \sin{(wt)}q^2p}=\frac{d\gamma}{q+\alpha \sin{(wt)}p^2q}
\end{equation}
which turns in the following system of equations
\begin{align}\label{sysgp}
 \frac{dq}{dt}&=p(1+\alpha \sin{(wt)}q^2),\nonumber\\
\frac{d\gamma}{dt}&=q(1+\alpha \sin{(wt)}p^2).
\end{align}
Integrating the equations along the section $\gamma$, we have that $p=\gamma$, then, we can solve system \eqref{sysgp} whose
solutions result in
\begin{equation}\label{compsol}
 \gamma(q,t)=\frac{q}{\tanh{(t+C)}}
\end{equation}
where $C$ is a constant of integration.

Equation \eqref{compsol} is a particular solution of the Hamilton--Jacobi equation corresponding with a nonlinear trigonometric system on a cosymplectic manifold.

For the complete solution, we need $\gamma(q,t)$ expressed in terms of one single parameter $C$ as in \eqref{compsol}, according to the theory explained in \eqref{compsolHJ}.
We construct the diffeomorphism that provides the complete solution
 \begin{equation}\label{compsolHJtrig}
 \begin{array}{ccc}
  \Phi:\mathbb{R}\times \mathbb{R}\times\mathbb{R}& \rightarrow &  T^{*}\mathbb{R}\times \mathbb{R}  \\
  \Phi(q,t,C) &\mapsto &  \Phi(q,\gamma(q,t,C),t)
 \end{array}
 \end{equation}
 such that
 \begin{equation}
 \Phi(q,t,C)= \left(q,t,\frac{q}{\tanh{(t+C)}}\right)
 \end{equation}

\subsubsection*{Nonlinear oscillators}
Fris {\it et al} \cite{fris} studied in 1965 systems that admit separability in two different coordinates and obtained families of superintegrable
potentials with constants of motion linear or quadratic in the velocities (momenta). The two first families can be considered as the more
general euclidean deformations \cite{cr} with strengths $k_2$ and $k_3$ 

\begin{equation}\label{firstpotential}
 V_a=\frac{1}{2}\omega_0^2(x^2+y^2)+\frac{k_2}{x^2}+\frac{k_3}{y^2},\quad V_b=\frac{1}{2}\omega_0^2(4x^2+y^2)+k_2x+\frac{k_3}{y^2}
\end{equation}
\noindent
of the $1:1$ and $2:1$ harmonic oscillators preserving quadratic superintegrability. The superintegrability of $V_a$ is known as
Winternitz--Smorodinsky oscillator \cite{ws} studied by Evans \cite{evans2} for the general case of degrees of freedom.
For models in two dimensions, we have to choose two sections $\gamma(x,t), \gamma(y,t):Q\times \mathbb{R}\rightarrow T^{*}Q\times \mathbb{R}$ as a solution of the Hamilton--Jacobi equation.

{\bf (1)} For the first potential $V_a$, the hamiltonian reads
\begin{equation}
 H=\frac{1}{2}p_x^2+\frac{1}{2}p_y^2+\frac{1}{2}\omega_0^2(x^2+y^2)+\frac{k_2}{x^2}+\frac{k_3}{y^2}.
\end{equation}
In this case, the Reeb vector field according to \eqref{reebeq} is
\begin{equation}
 \mathcal{R}_H=\frac{\partial}{\partial t}+p_x\frac{\partial}{\partial x}+p_y\frac{\partial}{\partial y}-\left(\omega_0^2x-\frac{2k_2}{x^3}\right)\frac{\partial}{\partial p_x}-\left(\omega_0^2y-\frac{2k_3}{y^3}\right)\frac{\partial}{\partial p_y}
\end{equation}
\noindent
The {\it projected} vector field on $Q\times \mathbb{R}$ is
\begin{equation}
  \mathcal{R}_H^{\gamma}=\frac{\partial}{\partial t}+p_x\frac{\partial}{\partial x}+p_y\frac{\partial}{\partial y}
\end{equation}
Now,
\begin{align*}
 T\gamma (R_H^{\gamma})=\frac{\partial}{\partial t}+\frac{\partial \gamma^{[x]}}{\partial t}\frac{\partial}{\partial p_x}&+\frac{\partial \gamma^{[y]}}{\partial t}\frac{\partial}{\partial p_y}+\\
 &+p_x\left(\frac{\partial}{\partial x}+\frac{\partial \gamma^{[x]}}{\partial x}\frac{\partial}{\partial p_x}\right)+p_y\left(\frac{\partial}{\partial y}+\frac{\partial \gamma^{[y]}}{\partial y}\frac{\partial}{\partial p_y}\right)
\end{align*}
that compared with $\mathcal{R}_H$, it results in the Hamilton--Jacobi equations
\begin{align}\label{MPE}
  &\gamma_t^{[x]}+\frac{1}{m}\gamma^{[x]}\gamma_x^{[x]}=-\left(\omega_0^2x-\frac{2k_2}{x^3}\right),\nonumber\\
  &\gamma_t^{[y]}+\frac{1}{m}\gamma^{[y]}\gamma_y^{[y]}=-\left(\omega_0^2y-\frac{2k_3}{y^3}\right).
\end{align}
The two equations for the sections $\gamma^{[x]}$ and $\gamma^{[y]}$ are the same. They are quasilinear PDEs that
can be solved with the method of the characteristics. So, the associated characteritic system for $\gamma^{[x]}$ is
\begin{equation}
 dt=\frac{dx}{\gamma^{[x]}}=\frac{d\gamma^{[x]}}{-\left(\omega_0^2x-\frac{2k_2}{x^3}\right)},
\end{equation}
If we need $\gamma^{[x]}$ in terms of $(x,t)$, we have that
\begin{equation}\label{gamcomp}
 \gamma^{[x]}=\sqrt{-\omega_0^2x^2-2k_2x^{-2}+C}
\end{equation}
including one parameter $C$. For $\gamma^{[y]}$ we obtain equivalent expressions to \eqref{gamcomp},
\begin{equation}
 \gamma^{[y]}=\sqrt{-\omega_0^2y^2-2k_2y^{-2}+K}
\end{equation}


The sections $\gamma^{[x]}$ and $\gamma^{[y]}$ are a solution of the Hamilton--Jacobi equation corresponding with a nonlinear
oscillator with potential $V_a$.
To contemplate the complete solutions, we make use of \eqref{gamcomp} in terms of the parameter $C$. We construct a diffeomorphism
\begin{equation}\label{compsolHJpot1}
 \begin{array}{ccc}
  \Phi:\mathbb{R}^2\times \mathbb{R}\times \mathbb{R}^2& \rightarrow &  T^{*}\mathbb{R}^2\times \mathbb{R}  \\
  \Phi(x,y,t,C,K) &\mapsto &  \Phi(x,y,\gamma^{[x]}(C),\gamma^{[y]}(K),t)
 \end{array}
 \end{equation}
 such that
 {\begin{footnotesize}
 \begin{align}
 \Phi(x,&y,t,C,K)= \nonumber \\ 
 &\left(x,y,\gamma^{[x]}=\sqrt{-\omega_0^2x^2-2k_2x^{-2}+C}, \gamma^{[y]}=\sqrt{-\omega_0^2y^2-2k_2y^{-2}+K},t\right)
 \end{align}
 \end{footnotesize}}

\noindent
{\bf (2)} For the second potential $V_b$, the hamiltonian on $T^{*}Q\times \mathbb{R}$ reads
\begin{equation}\label{secondpotential}
 H=\frac{1}{2}(p_x^2+p_y^2)+\frac{1}{2}\omega^2(4x^2+y^2)+k_2x+\frac{k_3}{y^2}.
\end{equation}
The Reeb vector field \eqref{reebeq} on $T^{*}Q\times \mathbb{R}$ for this example reads
\begin{equation*}
 \mathcal{R}_H=\frac{\partial}{\partial t}+\frac{p_x}{m}\frac{\partial}{\partial x}+\frac{p_y}{m}\frac{\partial}{\partial y}-\left(4\omega_0^2x+k_2\right)\frac{\partial}{\partial p_x}\left(\omega_0^2y-\frac{2k_3}{y^3}\right)\frac{\partial}{\partial p_y}
\end{equation*}
and the projected Reeb vector field on $Q\times \mathbb{R}$ is
\begin{equation*}
  \mathcal{R}_H^{\gamma}=\frac{\partial}{\partial t}+\frac{p_x}{m}\frac{\partial}{\partial x}+\frac{p_y}{m}\frac{\partial}{\partial x}
\end{equation*}
Now,
\begin{align*}
 T\gamma(\mathcal{R}_H^{\gamma})=\frac{\partial}{\partial t}+&\frac{\partial \gamma^{[x]}}{\partial t}\frac{\partial}{\partial p_x}+\frac{\partial \gamma^{[y]}}{\partial t}\frac{\partial}{\partial p_y}+\\
 &+\frac{p_x}{m}\left(\frac{\partial}{\partial x}+\frac{\partial \gamma^{[x]}}{\partial x}\frac{\partial}{\partial p_x}\right)+\frac{p_y}{m}\left(\frac{\partial}{\partial y}+\frac{\partial \gamma^{[y]}}{\partial y}\frac{\partial}{\partial p_y}\right)
\end{align*}
whose difference with respect to $\mathcal{R}_H$ gives us the Hamilton--Jacobi equations for the sections $\gamma^{[x]}$ and $\gamma^{[y]}$. These are
\begin{align}\label{nose1}
 &\gamma_t^{[x]}+\frac{1}{m}\gamma^{[x]}\gamma_x^{[x]}=-\left(4x\omega_0^2+k_2\right),\nonumber\\
  &\gamma_t^{[y]}+\frac{1}{m}\gamma^{[y]}\gamma_y^{[y]}=-\left(\omega_0^2y-\frac{2k_3}{y^3}\right).
\end{align}
The equation for $\gamma^{[y]}$ in \eqref{nose1} has the same solution as \eqref{MPE},  
that is equation \eqref{gamcomp},
\begin{equation}\label{gamcompy}
 \gamma^{[y]}=\sqrt{-\omega_0^2y^2-2k_2y^{-2}+K}
\end{equation}

The equation for $\gamma^{[x]}$ in \eqref{nose1} can be solved by proposing the associated characteristic system
\begin{equation}
 dt=\frac{dx}{\frac{\gamma}{m}}=\frac{d\gamma}{-(4x\omega_0^2+k_2)}.
\end{equation}
If we want to express $\gamma^{[x]}$ in terms of $(x,t)$, we have
\begin{equation}\label{gammaxcomplete1}
 \gamma^{[x]}=m\sqrt{-\left(\frac{4\omega^2}{m}x^2+\frac{2kx}{m}\right)+C}
\end{equation}
with $C$ a parameter of integration.

%

 According to the theory exposed in \eqref{compsolHJ}, if we aim at obtaining complete solutions, we need to construct
 a diffeomorphism based on the one parameter solutions given in \eqref{gammaxcomplete1}.

\begin{equation}\label{compsolHJpot2}
 \begin{array}{ccc}
  \Phi:\mathbb{R}^2\times \mathbb{R}\times \mathbb{R}^2& \rightarrow &  T^{*}\mathbb{R}^2\times \mathbb{R}  \\
  \Phi(x,y,t,C,K) &\mapsto &  \Phi(x,y,\gamma^{[x]}(C),\gamma^{[y]}(K),t)
 \end{array}
 \end{equation}
 such that
 {\begin{footnotesize}
 \begin{align}
 \Phi(x,&y,t,C,K)= \nonumber \\ 
 &\left(x,y,\gamma^{[x]}=m\sqrt{-\left(\frac{4\omega^2}{m}x^2+\frac{2Kx}{m}\right)+C}, \gamma^{[y]}=\sqrt{-\omega_0^2y^2-2k_2y^{-2}+K},t\right)
 \end{align}
 \end{footnotesize}

\section{Hamilton--Jacobi theory on contact manifolds}

We consider the extended phase space $T^{*}Q\times \mathbb{R}$  with canonical projections of the first and second variables
$\rho:T^{*}Q\times \mathbb{R}\rightarrow T^{*}Q$ and $t:T^{*}Q\times \mathbb{R}\rightarrow \mathbb{R}$. The hamiltonian function is
$H:T^{*}Q\times \mathbb{R}\rightarrow \mathbb{R}$. It can be illustrated through the following diagram

\[
\xymatrix{ T^{*}Q\times \mathbb{R}
\ar[dd]^{\rho} \ar[ddrr]^{t}\ar@/^2pc/[ddrr]^{H}\\
  &  & &\\
T^{*}Q &  & \mathbb{R}}
\]

\bigskip
\noindent
We have local canonical coordinates $\{q^i,p_i,t\}, i=1,\dots,n$. 
The one-form is $\eta=dt-\rho^{*}\theta_Q$, which reads
\begin{equation}\label{contactoneform}
 \eta=dt-\sum_{i=1}^n p_idq^i.
\end{equation}

The pair $(T^{*}Q\times \mathbb{R},\eta)$ is a contact manifold.
The Reeb vector field satisfying
$$\iota_{\mathcal{R}}\eta=1,\quad \iota_{\mathcal{R}}d\eta=0$$
is 
$$\mathcal{R}=\frac{\partial}{\partial t}.$$

Consider the fibration $\pi: T^{*}Q\times \mathbb{R}\rightarrow Q\times \mathbb{R}$.
To have dynamics, we consider the vector field
\begin{equation}
 X_H=\sharp_{\Lambda}(dH)+H\mathcal{R}.
\end{equation}
In coordinates \cite{mejicanos}, it reads
{\begin{footnotesize}
\begin{equation}\label{1hvf}
 X_H=\sum_{i=1}^n\left(p_i\frac{\partial H}{\partial p_i}-H\right)\frac{\partial}{\partial t}-\sum_{i=1}^n\left(p_i\frac{\partial H}{\partial t}+\frac{\partial H}{\partial q^i}\right)\frac{\partial}{\partial p_i}+\sum_{i=1}^n\frac{\partial H}{\partial p_i}\frac{\partial}{\partial q^i}
\end{equation}
\end{footnotesize}}
that is compatible with \eqref{lambdaex} and furthermore it that satisfies the conditions
\begin{equation*}
 \flat{(X_H)}=-(\mathcal{R}(H)+H)\eta+dH.
\end{equation*}
where $\flat$ is the isomorphism defined in \eqref{betamorph} and 
\begin{equation}\label{1exph}
 \eta(X_H)=-H.
\end{equation}
Recall that $(T^{*}Q\times \mathbb{R},\Lambda,\mathcal{R})$ is a Jacobi manifold with $\Lambda$ given in \eqref{lambdaex}.
The proposed contact structure provides us with the {\it dissipation Hamilton equations} \cite{mejicanos}.

\begin{equation}\label{hamileq}
\left\{\begin{aligned}
 {\dot q}^i&=\frac{\partial H}{\partial p_i},\\
 {\dot p}_i&=-\frac{\partial H}{\partial q^i}-p_i\frac{\partial H}{\partial t},\\
{\dot t}&=p_i\frac{\partial H}{\partial p_i}-H.
 \end{aligned}\right.
 \end{equation}
for all $i=1,\dots,n$. 

Consider $\gamma$ a section of $\pi:T^{*}Q\times \mathbb{R} \rightarrow Q\times \mathbb{R}$, i.e., $\pi\circ \gamma=\text{id}_{Q\times \mathbb{R}}$. We can use $\gamma$ to project $X_H$ on $Q\times \mathbb{R}$
just defining a vector field $X_{H}^{\gamma}$ on $Q\times \mathbb{R}$ by
\begin{equation}\label{hjpar}
 X_H^{\gamma}=T_{\pi}\circ X_{H}\circ \gamma.
\end{equation}
The following diagram summarizes the above construction
\[
\xymatrix{ T^{*}Q\times \mathbb{R}
\ar[dd]^{\pi} \ar[rrr]^{X_H}&   & &T(T^{*}Q\times \mathbb{R})\ar[dd]^{T_{\pi}}\\
  &  & &\\
Q\times \mathbb{R} \ar@/^2pc/[uu]^{\gamma}\ar[rrr]^{X^{\gamma}_H}&  & & T(Q\times \mathbb{R})}
\]

Assume that $\gamma(Q\times \mathbb{R})$ is a legendrian submanifold of $(T^{*}Q\times \mathbb{R}, \eta)$, such that $\gamma_t$ is closed.

\begin{theorem}
 The vector fields $X_H$ and $X_H^{\gamma}$ are $\gamma$-related if and only if the following equation is satisfied
\begin{equation}\label{eqtheorem}
[d(H\circ \gamma)]^{V}=-H\dot{\gamma}_q
\end{equation}
where $[\dots]^{V}$ denotes the vertical lift of a one-form and $\dot{\gamma}_q$ is the tangent vector in a point $q$ associated with the curve
\[
\xymatrix{
 \mathbb{R}\ar[rr]\ar@/^2pc/[rrrrrr]^{\gamma_q} & &  Q\times \mathbb{R}\ar[rr] & &  T^{*}Q\times \mathbb{R}\ar[rr]^{\rho} & &  T^{*}Q}
\]

\end{theorem}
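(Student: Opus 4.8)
The plan is to mirror, step for step, the proof of the cosymplectic Hamilton--Jacobi theorem given around \eqref{eqtheorem2}, with the contact Hamiltonian field \eqref{1hvf} playing the role of the Reeb field \eqref{reebcosym}. By definition, $X_H$ and $X_H^{\gamma}$ are $\gamma$-related if and only if $T\gamma\circ X_H^{\gamma}=X_H\circ\gamma$, so the whole argument consists of writing both sides in the canonical coordinates $\{q^i,p_i,t\}$ and comparing them componentwise. I would take the section in the form $\gamma(q^i,t)=(q^i,\gamma^j(q^i,t),t)$, for which the tangent map acts exactly as in \eqref{tangent2}:
\begin{equation*}
T\gamma\left(\frac{\partial}{\partial t}\right)=\frac{\partial}{\partial t}+\sum_{j=1}^{n}\frac{\partial\gamma^j}{\partial t}\frac{\partial}{\partial p_j},\qquad T\gamma\left(\frac{\partial}{\partial q^i}\right)=\frac{\partial}{\partial q^i}+\sum_{j=1}^{n}\frac{\partial\gamma^j}{\partial q^i}\frac{\partial}{\partial p_j}.
\end{equation*}

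First I would compute the projected vector field. Since $T_{\pi}$ annihilates the $\partial/\partial p_i$-block, formula \eqref{1hvf} gives at once
\begin{equation*}
X_H^{\gamma}=\left(\sum_{i=1}^{n}p_i\frac{\partial H}{\partial p_i}-H\right)\frac{\partial}{\partial t}+\sum_{i=1}^{n}\frac{\partial H}{\partial p_i}\frac{\partial}{\partial q^i},
\end{equation*}
understood on $Q\times\mathbb{R}$, i.e. with $p_i$ replaced by $\gamma^i(q,t)$ and every partial of $H$ taken at $\gamma(q,t)$. Substituting this into $T\gamma(X_H^{\gamma})$ via the two formulas above and comparing with $X_H\circ\gamma$ read off from \eqref{1hvf}, one observes that, because $\pi\circ\gamma=\text{id}_{Q\times\mathbb{R}}$, the $\partial/\partial t$ and $\partial/\partial q^i$ components coincide automatically; the whole content of the relatedness condition therefore sits in the $\partial/\partial p_j$ components.

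Equating the $\partial/\partial p_j$ components yields, for every $j$,
\begin{equation*}
\left(\sum_{i=1}^{n}p_i\frac{\partial H}{\partial p_i}-H\right)\frac{\partial\gamma^j}{\partial t}+\sum_{i=1}^{n}\frac{\partial H}{\partial p_i}\frac{\partial\gamma^j}{\partial q^i}+\frac{\partial H}{\partial q^j}+p_j\frac{\partial H}{\partial t}=0 .
\end{equation*}
At this point I would invoke the standing hypothesis that $\gamma_t$ is closed, $d\gamma_t=0$, that is $\partial\gamma^j/\partial q^i=\partial\gamma^i/\partial q^j$; this is precisely the permutation of indices that lets one recognise $\partial H/\partial q^j+\sum_i(\partial H/\partial p_i)(\partial\gamma^j/\partial q^i)$ as the $q^j$-derivative of $H\circ\gamma$, and hence rewrite the displayed system so as to match \eqref{eqtheorem}, namely the equality in each $\partial/\partial p_j$ component of $[d(H\circ\gamma)]^{V}$ with $-H\dot{\gamma}_q$, where $\dot{\gamma}_q=\sum_j(\partial\gamma^j/\partial t)\,\partial/\partial p_j$ is the velocity of the curve $\gamma_q$ of the statement. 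Since the steps are reversible, reading the chain backwards gives the converse.

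I expect the only genuine subtlety to be the careful bookkeeping of the term $H\mathcal{R}$ in $X_H=\sharp_{\Lambda}(dH)+H\mathcal{R}$: this contribution, which has no analogue in the symplectic or cosymplectic settings, is exactly what produces the factor $-H$ on the right-hand side of \eqref{eqtheorem} and the mixed terms $p_j\,\partial H/\partial t$, so it must be carried through the comparison rather than dropped. Two further points are worth flagging. First, in contrast with \eqref{reebc} one now has $\eta(X_H)=-H\neq1$ by \eqref{1exph}, so $t$ is no longer an affine time along the integral curves; nevertheless $\gamma$-relatedness still transports integral curves of $X_H^{\gamma}$ into (reparametrised) integral curves of $X_H$, which is all the Hamilton--Jacobi method requires. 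Second, as in the cosymplectic proof the closedness of $\gamma_t$ is not decorative: it is exactly the hypothesis legitimising the index permutation above, and without it one is left only with a weaker, non-exact version of \eqref{eqtheorem}.
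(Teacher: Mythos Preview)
Your proposal is correct and follows essentially the same route as the paper's own proof: write $T\gamma(X_H^{\gamma})=X_H\circ\gamma$ in the canonical coordinates, use the expressions \eqref{tangent3} for $T\gamma(\partial/\partial t)$ and $T\gamma(\partial/\partial q^i)$, reduce to the $\partial/\partial p_j$ components, and invoke the closedness $d\gamma_t=0$ to permute indices and recover \eqref{eqtheorem}. Your version is in fact more explicit than the paper's (which merely says ``it is straightforward to retrieve condition \eqref{eqtheorem}''), and your displayed coordinate identity coincides with the paper's local form of the Hamilton--Jacobi equation given immediately after the proof.
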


\begin{proof}
 The vector fields $X_H$ and $X_H^{\gamma}$ are $\gamma$ related if $T\gamma (X_H^{\gamma})=X_H$. That is,
\begin{equation}\label{transf1}
T\gamma (X_H^{\gamma})=\left(p_i\frac{\partial H}{\partial p_i}-H\right)T\gamma\left(\frac{\partial}{\partial t}\right)+\frac{\partial H}{\partial p_i}T\gamma\left(\frac{\partial}{\partial q^i}\right)=X_H\circ \gamma
\end{equation}
The section $\gamma$ in local coordinates has components $(q^i,\gamma^j(q^i,t),t)$ with $i,j=1,\dots,n$ such that 
\begin{equation}\label{tangent3}
T\gamma\left(\frac{\partial}{\partial t}\right)=\frac{\partial}{\partial t}+\sum_{j=1}^n\frac{\partial \gamma^j}{\partial t}\frac{\partial}{\partial p_j},\quad T\gamma\left(\frac{\partial}{\partial q^i}\right)=\frac{\partial}{\partial q^i}+\sum_{j=1}^n \frac{\partial \gamma^j}{\partial q^i}\frac{\partial}{\partial p_j},\\
\end{equation}
Introducing \eqref{tangent3} in equation \eqref{transf1}, it is straightforward to retrieve condition \eqref{eqtheorem} if 
a further condition on the one-form $\gamma$ is imposed. It is
\begin{equation}
 d\gamma_t=0.
\end{equation}
This means that $\gamma_t$ is closed and fulfills the legendrian submanifold condition.
\end{proof}
Equation \eqref{eqtheorem} is known as a {\it Hamilton--Jacobi equation with respect to a contact structure.}
In local coordinates, 
\begin{equation}
 p_j\frac{\partial H}{\partial t}+\frac{\partial H}{\partial q^j}+\sum_{i=1}^n \left(p_i\frac{\partial H}{\partial p_i}-H\right)\frac{\partial \gamma^j}{\partial t}+\sum_{i=1}^n \frac{\partial H}{\partial p_i}\frac{\partial \gamma^j}{\partial q^i}=0
\end{equation}

%
%

\subsection{Complete solutions}

\begin{definition}
 A {\it complete solution} of the Hamilton--Jacobi equation on a contact manifold $(M,\eta)$ 
 is a diffeomorphism $\Phi:Q\times \mathbb{R}\times \mathbb{R}^n\rightarrow T^{*}Q\times \mathbb{R}$ such that for a set of
 parameters $\lambda\in \mathbb{R}^n, \lambda=(\lambda_1,\dots,\lambda_n)$, the mapping
 
 \begin{equation}
 \begin{array}{ccc}
  \Phi_{\lambda}:Q\times \mathbb{R}& \rightarrow &  T^{*}Q\times \mathbb{R}  \\
  \Phi_{\lambda}(q,t) &\mapsto &  \Phi(q,\gamma(q,t),t)
 \end{array}
 \end{equation}
\noindent
is a solution of the Hamilton--Jacobi equation.

\medskip

We have the following diagram

\[
\xymatrix{ Q\times \mathbb{R}\times \mathbb{R}^n
\ar[dd]^{\alpha} \ar[rrr]^{\Phi}&   & &T^{*}Q\times \mathbb{R}\ar[dd]^{f_i}\ar[lll]^{\Phi^{-1}}\\
  &  & &\\
 \mathbb{R}^n \ar[rrr]^{\pi_i}&  & & \mathbb{R}}
\]
\noindent
where we define functions $f_i$ such that for a point $p\in T^{*}Q\times \mathbb{R}$, it is satisfied
\begin{equation}\label{functions}
 f_i(p)=\pi_i\circ \alpha\circ \Phi^{-1}(p).
\end{equation}
and $\alpha:Q\times \mathbb{R}\times \mathbb{R}^n\rightarrow \mathbb{R}^n$ is the canonical projection.
\end{definition}

\begin{theorem}
 There exist no linearly independent commuting set of first-integrals in involution \eqref{functions} for a complete solution of the Hamilton--Jacobi
 equation on a contact manifold.
\end{theorem}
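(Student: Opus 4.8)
The plan is to argue by contradiction, mirroring the structure of the cosymplectic case (the Theorem on commuting functions $\{f_i,f_j\}=0$) but exploiting the fundamental difference that a contact manifold carries a genuine \emph{Jacobi} bracket rather than a Poisson bracket: the Jacobi bracket on $(T^{*}Q\times\mathbb{R},\Lambda,\mathcal{R})$ is not a derivation in each argument, so a ``set of first-integrals in involution'' in the usual Liouville sense cannot force the requisite foliation. First I would suppose, for contradiction, that $\Phi$ is a complete solution and that the functions $f_1,\dots,f_n$ defined in \eqref{functions} are linearly independent and pairwise commute for the contact (Jacobi) bracket, $\{f_i,f_j\}=0$. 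Exactly as in the cosymplectic proof, one has $\mathrm{Im}(\Phi_\lambda)=\cap_{i=1}^n f_i^{-1}(\lambda_i)$, so each $f_i$ is constant on $\mathrm{Im}(\Phi_\lambda)$ and hence $df_i$ annihilates $T(\mathrm{Im}(\Phi_\lambda))$; moreover, since $\Phi_\lambda$ solves the Hamilton--Jacobi equation, $\mathrm{Im}(\Phi_\lambda)=\gamma(Q\times\mathbb{R})$ is a legendrian submanifold, which by the Proposition has dimension $n$ and on which $\mathcal{C}$ is the whole tangent space.

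Next I would compute the Jacobi bracket explicitly on these functions. Using \eqref{jacbracket}, $\{f_i,f_j\}=\Lambda(df_i,df_j)+f_i\mathcal{R}(f_j)-f_j\mathcal{R}(f_i)$. The term $\Lambda(df_i,df_j)=\langle df_j,\sharp_\Lambda(df_i)\rangle$ vanishes on $\mathrm{Im}(\Phi_\lambda)$ by the legendrian condition $\sharp_\Lambda(TN^\circ)=TN$ together with $df_i\in TN^\circ$ and $df_j\in TN^\circ$, just as in the cosymplectic argument. Therefore the vanishing of $\{f_i,f_j\}$ on $\mathrm{Im}(\Phi_\lambda)$ reduces to the identity $f_i\,\mathcal{R}(f_j)=f_j\,\mathcal{R}(f_i)$ on the whole tubular region swept out as $\lambda$ varies, i.e. on an open set of $T^{*}Q\times\mathbb{R}$. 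In coordinates $\mathcal{R}=\partial/\partial t$, so this says $f_i\,\partial f_j/\partial t=f_j\,\partial f_i/\partial t$, equivalently $\partial_t(f_i/f_j)=0$ wherever $f_j\neq 0$: the ratios $f_i/f_j$ would have to be $t$-independent. The key step is to show this is incompatible with $\Phi$ being a \emph{diffeomorphism} of $Q\times\mathbb{R}\times\mathbb{R}^n$ onto $T^{*}Q\times\mathbb{R}$ while the $f_i$ are functionally independent. Indeed, $\Phi^{-1}$ has the full set $(\lambda_1,\dots,\lambda_n)$ as $n$ of its component functions via $f_i=\pi_i\circ\alpha\circ\Phi^{-1}$, so $df_1\wedge\cdots\wedge df_n\neq 0$ on an open set; but the relations $f_i\,\partial_t f_j=f_j\,\partial_t f_i$ together with the fact that on the $n$-dimensional legendrian leaves $\mathcal{C}=TN$ (hence the Reeb direction $\partial/\partial t$ is tangent to the leaves) force the $df_i$ to be dependent once one restricts to a leaf and uses that the leaf is transverse to the fibres of $\alpha$. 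More concretely: the $2n+1 - n = n+1$ functions $(f_1,\dots,f_n,t)$ would cut out the $n$-dimensional legendrian leaves, but the involution relation forces $\partial_t$ to lie in the common kernel of $df_1,\dots,df_n$, contradicting $\eta(\mathcal{R})=1$ read off against the legendrian condition $\eta|_N=0$ unless some $f_i$ vanishes identically — which contradicts functional independence.

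I expect the main obstacle to be making the last contradiction airtight without circularity: one must carefully separate ``commuting for the Jacobi bracket'' from ``commuting for the underlying would-be Poisson structure,'' and pin down precisely why the extra conformal terms $f_i\mathcal{R}(f_j)-f_j\mathcal{R}(f_i)$ obstruct the existence of $n$ independent such functions. The cleanest route is probably to note that a nowhere-vanishing function $f$ with $\{f,g\}=0$ for all $g$ in the putative involutive family would, together with the legendrian/Reeb incidence $\eta|_N=0$ but $\iota_{\mathcal{R}}\eta=1$, force $\mathcal{R}(f)=0$ pointwise on each leaf; running this for all $n$ functions shows the leaf-defining differentials all kill $\mathcal{R}$, so they span at most the $2n$-dimensional contact hyperplane distribution $\ker\eta$ and cannot be functionally independent enough to cut out the $n$-dimensional legendrian leaf consistently across the foliation — a dimension count in $\ker\eta$ (a symplectic vector space of rank $2n$ in which $N$ is lagrangian) then yields the contradiction. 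A secondary technical point is to handle the loci where individual $f_j$ vanish; by continuity and the openness of the image of $\Phi$ it suffices to argue on the dense open set where all $f_j\neq 0$ and then extend.
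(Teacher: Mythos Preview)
Your reduction to the identity $f_i\,\mathcal{R}(f_j)=f_j\,\mathcal{R}(f_i)$ via the legendrian condition $\sharp_\Lambda(TN^\circ)=TN$ is exactly the paper's argument: the $\Lambda$-term $\Lambda(df_i,df_j)=\langle df_j,\sharp_\Lambda(df_i)\rangle$ drops because $\sharp_\Lambda(df_i)\in TN$ while $df_j\in TN^\circ$. Where you diverge is in how you extract a contradiction from this identity. The paper does not attempt any dimension count in $\ker\eta$; it simply reads off the dichotomy directly: either $\mathcal{R}(f_i)=0$ for all $i$ (which it interprets as the $f_i$ being constants), or $\mathcal{R}(f_i/f_j)=0$ (which it interprets as the $f_i$ failing to be linearly independent). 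It then appends the remark that $X_H(f_i)=df_i(\sharp_\Lambda(dH))+H\,\mathcal{R}(f_i)=H\,\mathcal{R}(f_i)$, using $\sharp_\Lambda(dH)\in TN$ once more, to conclude that the $f_i$ are not first integrals of the hamiltonian flow in any case.

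Your attempted alternative ending contains a genuine error. You write that on the legendrian leaves ``$\mathcal{C}=TN$ (hence the Reeb direction $\partial/\partial t$ is tangent to the leaves).'' This is false: for a contact manifold the characteristic distribution $\mathcal{C}$ is all of $TM$, so the condition $\sharp_\Lambda(TN^\circ)=TN\cap\mathcal{C}=TN$ carries no information about $\mathcal{R}$ being tangent to $N$. In fact $\eta|_N=0$ together with $\iota_{\mathcal{R}}\eta=1$ says precisely the opposite: $\mathcal{R}$ is everywhere \emph{transverse} to any legendrian $N$. You notice this yourself a few lines later (``contradicting $\eta(\mathcal{R})=1$ read off against $\eta|_N=0$''), but by then the argument has looped back on itself and the contradiction you claim is with your own mistaken assertion, not with the hypothesis. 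The subsequent dimension count in $\ker\eta$ is therefore unsupported. If you want to finish along the paper's lines, drop the geometric detour entirely: from $f_i\,\partial_t f_j=f_j\,\partial_t f_i$ one gets either $\partial_t f_i\equiv 0$ or the ratios $f_i/f_j$ are $t$-independent, and the paper takes those two alternatives at face value as ``constant'' and ``not linearly independent'' respectively.
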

\begin{proof}
 Consider the bracket
 \begin{equation}
 \{f_i,f_j\}=\Lambda(df_i,df_j)+f_i\mathcal{R}(f_j)-f_j\mathcal{R}(f_i)
 \end{equation}
 On the other hand, recall the definition of the hamiltonian vector field associated with $f_i\in C^{\infty}$ as $X_{f_i}=\sharp_{\Lambda}(df_i)+f_i\mathcal{R}$, then
 \begin{equation}
  X_{f_i}(f_j)=df_j(X_{f_i})=df_j(\sharp_{\Lambda}(df_i)+f_i\mathcal{R})=df_j(\sharp_{\Lambda}(df_i))+df_j(f_i\mathcal{R})
 \end{equation}
 Knowing that $\sharp_{\Lambda}(df_i)\in T\text{Im} \Phi_{\lambda}$, for all $df_i$ the terms $\Lambda(df_i,df_j)$ and $df_j(\sharp_{\Lambda}(df_i))$ vanish.
 Therefore,
 \begin{equation}
  f_i\mathcal{R}(f_j)-f_j\mathcal{R}(f_i)=0, \qquad \forall i,j=1,\dots,n.
 \end{equation}
 From here, we can discuss two possibilities
 \begin{enumerate}
  \item $\mathcal{R}(f_i)=0$
  \item $f_i\mathcal{R}(f_j)-f_j\mathcal{R}(f_i)=0$
 \end{enumerate}

 From 1. we have that $f_i$ are constants. From 2. we have that $\mathcal{R}(f_i/f_j)=0$ that implies that
 $f_i$ and $f_j$ are not linearly independent.
 \noindent
 {\it Remark:} If we calculate the evolution of the functions along the Hamiltonian flow, that is
 \begin{equation}
  X_H(f_i)=df_i(X_H)=df_i(\sharp_{\Lambda}(dH)+H\mathcal{R})=H(\mathcal{R}(f_i))
 \end{equation}
 since $df_i(\sharp_{\Lambda}(dH))=0$ because it is in $(T\text{Im}\Phi)^{o}$. We conclude that $f_i$ are not constants, given
 $X_{H}(f_i)\neq 0$.
 

\end{proof}

\subsection{Example}
Let us consider the hamiltonian on $T^{*}Q\times \mathbb{R}$ with local coordinates $\{q,p,S\}$, 
\begin{equation}
 H=\frac{p^2}{2m}+V(q)+\alpha S
\end{equation}
This is the corresponding hamiltonian of a {\it damped oscillator} \cite{mejicanos} which is retrieved by \eqref{hamileq}.
Taking the hamiltonian vector field as in \eqref{1hvf}, we have
\begin{equation}
 X_H=\left(\frac{p^2}{2m}-V(q)-\alpha S\right)\frac{\partial}{\partial S}-\left(\alpha p+ V'(q)\right)\frac{\partial}{\partial p}+\frac{p}{m}\frac{\partial}{\partial q}
\end{equation}
We choose a legendrian section $\gamma$ with local components $(q,\gamma(q,S),S)$. And $X_H^{\gamma}$ on $Q\times \mathbb{R}$ reads
\begin{equation}
 X_H^{\gamma}=\left(\frac{p^2}{2m}-V(q)-\alpha S\right)\frac{\partial}{\partial S}+\frac{p}{m}\frac{\partial}{\partial q}
\end{equation}
Using \eqref{hjpar}, we need to perfom the computations
\begin{equation}
 T\gamma\left(\frac{\partial}{\partial S}\right)=\frac{\partial}{\partial S}+\frac{\partial \gamma}{\partial S}\frac{\partial}{\partial p},\quad T\gamma\left(\frac{\partial}{\partial q}\right)=\frac{\partial}{\partial q}+\frac{\partial \gamma}{\partial q}\frac{\partial}{\partial p}
\end{equation}
The Hamilton--Jacobi equation of the damped oscillator reads
\begin{equation}\label{puf1}
 \left(\frac{p^2}{2m}-V(q)-\alpha S\right)\frac{\partial \gamma}{\partial S}+\frac{p}{m}\frac{\partial \gamma}{\partial q}+(p\alpha+V'(q))=0
\end{equation}
with $d\gamma_S=0$, that is $\gamma_S=\text{constant}$. Integrating the equations along the section $\gamma$, we have that $p=\gamma$, 
and setting the constant $\gamma_S=1$, then \eqref{puf1} can be rewritten as
\begin{equation}\label{puf2}
 \frac{\partial \gamma}{\partial q}+\frac{1}{2}\gamma+\alpha m+\frac{m}{\gamma}\left(V'(q)-V(q)-\alpha S\right)=0,
\end{equation}
which can be solved as
{\begin{footnotesize}
\begin{equation}\label{gammaimplicit1}
 q=\frac{c_1}{\sqrt{c_1^2-2c_2}}\ln{\left(\frac{\gamma+c_1-\sqrt{c_1^2-2c_2}}{\gamma+c_1+\sqrt{c_1^2-2c_2}}\right)}-\ln{\left(\frac{1}{2}\gamma^2+c_1\gamma+c_2\right)}+C
\end{equation}
\end{footnotesize}}
when $c_1^2>2c_2$, and
{\begin{footnotesize}
\begin{equation}\label{gammaimplicit2}
 q=\frac{2}{\sqrt{2c_2-c_1^2}}\tan^{-1}{\left(\frac{\gamma+c_1}{\sqrt{2c_2-c_1^2}}\right)}-\ln{\left(\frac{1}{2}\gamma^2+c_1\gamma+c_2\right)}+C
\end{equation}
\end{footnotesize}}
when $c_2>\frac{c_1^2}{2}$, with
\begin{equation}
 c_1=\alpha m,\quad c_2=-m^2\alpha S.
\end{equation}
where $V(q)=V'(q)=0$ have conveniently been chosen equal to zero for the possible analytical integration.
\noindent
The solution $\gamma$ of the Hamilton--Jacobi equation on a contact manifold for a damped oscillator is provided by the implicit equations \eqref{gammaimplicit1} and \eqref{gammaimplicit2}.

For a complete solution, we need to construct the diffeomorphism
\begin{equation}
 \begin{array}{ccc}
  \Phi:\mathbb{R}\times \mathbb{R}\times \mathbb{R} &\rightarrow &  T^{*}\mathbb{R}\times \mathbb{R}  \\
  \Phi(q,t,C) &\mapsto &  \Phi(q,\gamma(C),t)
 \end{array}
 \end{equation}
\noindent
with $\gamma(q,t)$ derived from the implicit equations \eqref{gammaimplicit1} and \eqref{gammaimplicit2} for $c_1^2>2c_2$ and $c_2>\frac{c_1^2}{2}$, correspondingly.

\section{Conclusions}
We have developed a two-fold geometric Hamilton--Jacobi theory: for time-dependent hamiltonians through a cosymplectic geometric formalism and for dissipative
hamiltonians through a contact geometry formalism.
 We have derived an explicit, new expression
for the Hamilton--Jacobi equation on a cosymplectic manifold to find solutions of a unidimensional trigonometric system
and two superintegrable potentials in two dimensions, one of them corresponds with the Winternitz--Smorodinsky oscillator.
Furthermore, we have developed a geometric Hamilton--Jacobi theory on a contact manifold for hamiltonians containing a dissipation term.
We have derived an explicit, new expression for the Hamilton--Jacobi
equation on a contact manifold to find solutions of a one-dimensional damped oscillator.

\section*{Acknowledgements}
This work has been partially supported by MINECO MTM 2013-42-870-P and
the ICMAT Severo Ochoa project SEV-2011-0087.

\end{document}